\newtheorem{theorem}{Theorem}
\newtheorem{proposition}{Proposition}
\newtheorem{lemma}{Lemma}
\newcommand{\dpst}{\displaystyle}
\newcommand{\Tau}{\mathrm{T}}
\newcommand{\E}{\mathbb{E}}
\newcommand{\N}{\mathbb{N}}
\newcommand{\R}{\mathbb{R}}
\newcommand{\argsinh}{\mathrm{argsinh}\,}
\newcommand{\bh}{\boldsymbol{h}}
\newcommand{\bt}{\boldsymbol{t}}
\newcommand{\bu}{\boldsymbol{u}}
\newcommand{\bx}{\boldsymbol{x}}
\newcommand{\by}{\boldsymbol{y}}
\newcommand{\bI}{\boldsymbol{I}}
\newcommand{\bzero}{\boldsymbol{0}}
\newcommand{\bomega}{\boldsymbol{\omega}}
\newcommand{\bOmega}{\boldsymbol{\Omega}}
\newcommand{\btau}{\boldsymbol{\tau}}
\newcommand{\bTau}{\boldsymbol{\Tau}}
\newcommand{\absu}{\lvert u \rvert}
\newcommand{\absx}{\lvert x \rvert}
\newcommand{\sm}{{\cal X}}
\title{Simulating space-time random fields with nonseparable Gneiting-type covariance functions}
\author{Denis Allard$^{a,*}$, Xavier Emery$^{b,c}$, Céline Lacaux$^d$, Christian Lantuéjoul$^e$\\
(authors are in alphabetical order)\\
\small{$^a$Biostatistics and Spatial Processes (BioSP), INRA PACA, 84914 Avignon Cedex, France}\\
\small{$^*$ Corresponding author. Email address: denis.allard@inra.fr}\\
\small{$^b$Department of Mining Engineering, University of Chile, Santiago, Chile} \\
\small{$^c$Advanced Mining Technology Center, University of Chile, Santiago, Chile}\\
\small{$^d$Avignon Université, LMA EA 2151, 84000, Avignon, France}\\
\small{$^e$Centre de Géosciences, MINES ParisTech, PSL University, Paris, France}
}
\begin{document}
\medskip
\date{\today}
\maketitle

\section*{Abstract}
Two algorithms are proposed to simulate space-time Gaussian random fields with a covariance function belonging to an extended Gneiting class, the definition of which depends on a completely monotone function associated with the spatial structure and a conditionally negative definite function associated with the temporal structure. In both cases, the simulated random field is constructed as a weighted sum of cosine waves, with a Gaussian spatial frequency vector and a uniform phase. The difference lies in the way to handle the temporal component. The first algorithm relies on a spectral decomposition in order to simulate a temporal frequency conditional upon the spatial one, while in the second algorithm the temporal frequency is replaced by an intrinsic random field whose variogram is proportional to the conditionally negative definite function associated with the temporal structure. Both algorithms are scalable as their computational cost is proportional to the number of space-time locations, which may be unevenly spaced in space and/or in time. They are illustrated and validated through synthetic examples.
\bigskip

\noindent \emph{Keywords:} continuous spectral simulation; spectral measure; substitution random field; Gaussian random field.

\bigskip

%
\section{Introduction}
\label{sec:intro}
%

The simulation of random fields plays an increasingly important role in environmental and climate studies, for example to quantify uncertainties and to  assess adaptation scenarios to global changes. Space-time simulations that span over relatively large regions and long periods of time generate very large space-time grids as soon as the resolution is not coarse. However, simulating space-time random fields on very large grids remains a challenge, in particular for nonseparable covariance functions able to capture space-time complexity, such as the Gneiting class of covariance functions \citep{Gneiting2002}. This motivation prompted this research.

In the following, $Z(\bx,t)$ will denote a space-time random field defined over $\R^k \times \R$, where $k$ is the space dimension, with $k=2$ or $k=3$  in most applications. Here, and in the rest of this work, we will use roman letters for scalars and bold letters for vectors. Without loss of generality, we shall assume that the random field is centered, i.e. $\mathbb{E}[Z(\bx,t)]=0,\ \forall (\bx,t) \in \R^k \times \R$.  It will also be assumed that the random field is second-order stationary, so that the covariance function only depends on the space-time lag $(\bh,u) \in \R^k \times \R$:
\begin{equation}
    \hbox{Cov}( Z(\bx,t) , Z(\bx+\bh,t+u)) = C(\bh,u).
    \label{eq:sta2}
\end{equation}
The functions $C(\bh,0)$ and $C(\bzero,u)$ are purely spatial and temporal covariance functions, respectively. It is well-known that $C$ must be a positive semi-definite function over $\R^k \times \R$, see for example \cite{gneiting2010continuous} and references therein. 

A space-time covariance $C$ is separable if it can be factored into spatial and temporal covariance functions, so that 
\begin{equation}
    C(\bh,u)  = C(\bzero,0)^{-1} C(\bh,0) C(\bzero,u).
    \label{eq:sep}
\end{equation}
Although congenial from a mathematical point of view, separability is often a simplistic assumption in many applications, 
which makes separable covariances unable to take into account sophisticated interactions between space and time.  Nonseparable space-time covariance functions can be constructed from basic building-blocks of purely spatial and/or temporal covariance functions, taking advantage of the fact that the class of covariance functions is closed under products, convex mixtures and limits. The product-sum \citep{de2001space} and  convex mixture \citep{ma2002spatio, ma2003families} approaches 
are two such examples. Another straightforward construction is to consider a space-time geometric anisotropy on $\R^{k+1}$, based on an isotropic covariance model and a geometrical transformation (rotation and rescaling) of the coordinates. 
All these models can easily be simulated, even on very large space-time grids,  by combining well-established purely spatial and purely temporal simulation algorithms, see for example \cite{schlather2015analysis}.

\cite{Gneiting2002} proposed a class of fully symmetric space-time  covariances that has  become one of the standard classes of models for space-time random fields in applications relating to climate variables. It was later extended by \cite{zastavnyi2011characterization}. We refer to this extended class as the Gneiting class or the class of Gneiting-type covariance functions. To the best of our knowledge, specific simulation methods for space-time Gaussian random fields with Gneiting-type covariances have not been developed yet. General methods can be used, but they are of limited applicability. The covariance matrix decomposition \citep{Davis1987} is an exact method, but it is well-known that the computational cost to simulate a random field at $n$ space-time locations $\{(\bx_1,t_1),\dots,(\bx_n,t_n)\}$ is ${\cal O}(n^3 )$. Its application is therefore limited to a few thousands space-time locations only. One way to reduce the computational burden consists in using block circulant matrices 
and FFT, as introduced by \cite{wood1994simulation}, \cite{Dietrich97} and \cite{chan1999simulation}. 

Continuous spectral and turning bands methods, see for example \cite{Shinozuka1971}, \cite{mtn73} and  \cite{Chiles2012} and references therein, are 
known to be computationally efficient to simulate spatial random fields. Moreover, they are scalable, in the sense that after some initial calculations, the computational burden is proportional to the number of locations, which do not need to be regularly located. Finally, most, if not all, spatial covariance models with explicitly known spectral measures can be simulated. For all these reasons, it is tempting to see whether these methods can be extended to work on nonseparable space-time random fields. 

In this work, two simulation methods inspired by continuous spectral approaches are presented to simulate space-time Gaussian random fields with extended Gneiting covariance functions. The outline is as follows. Section~\ref{sec:theory} provides the mathematical background required in the subsequent sections and presents the original and extended Gneiting classes of covariances. Section~\ref{sec:spectral} proposes a continuous spectral approach based on a conditional decomposition  of the spectral measure of $C(\bh,u)$. 
Section~\ref{sec:substitution} proposes an alternative approach that can be seen as a particular case of the substitution approach presented in \cite{Lantu2002}. Both approaches are illustrated with synthetic examples. Their pros and cons are compared and discussed in Section~\ref{sec:discussion}. Finally, Section~\ref{sec:conclusions} presents conclusions and proposes some perspectives to generalize our approaches.

%
\section{Theoretical background}
\label{sec:theory}
%

%
\subsection{Completely monotone and Bernstein functions}
A function on the positive half-line $\varphi(t), t \geq 0$, is said to be completely monotone if it possesses a derivative $\varphi^{(n)}$ for any order $n \in \N$ with $(-1)^n \varphi^{(n)}(t) \geq 0$ for any $t>0$. By Bernstein's theorem \citep{Bernstein1929, Feller1966}, a continuous completely monotone function can be written as the Laplace transform of a nonnegative measure $\mu$, i.e.
\begin{equation}
\varphi(t) = \int_{\R^+}e^{-rt} \mu (dr).
\label{eq:Laplace}
\end{equation}
Furthermore, by Schoenberg's theorem \citep{Schoenberg1938a}, the radial function
\begin{equation}
    \phi(\bh) = \varphi(\lvert\bh\rvert^2), \qquad \bh \in \R^k,
\end{equation}
where $\lvert \bx \rvert = \langle \bx, \bx \rangle^{1/2}$ denotes the Euclidean norm of vector $\bx$ and  $\langle \bx, \by \rangle $ is the usual scalar product between vectors $\bx$ and $\by$, is a covariance function for any dimension $k \in \N^*$ if and only if $\varphi(t), t \geq 0,$ is a completely monotone function.

Closely related to the completely monotone functions are the Bernstein functions. A function $\psi$ 
defined on $\R_+$ is a Bernstein function if it is a positive primitive of a completely monotone function. It can be shown 
\citep{slgsogvdk10} that $\psi$ admits the general expression 
\begin{equation}\label{eq:btn}
\psi (t) =  a + b t + \int_0^{+\infty} \bigl( 1 - e^{ - t x } \bigr) \,\nu ( d x), \qquad t \in \R_+, 
\end{equation}
where $a,b \geq 0$, and $ \nu$ is a positive measure, called Lévy measure, satisfying  
$$ \int_0^{+\infty} \min ( 1 , x) \, \nu ( d x) < \infty . $$
\subsection{The original Gneiting class of space-time covariance functions}
Even though the  applications we have in mind concern space and time, all the theoretical background is actually valid in the $\R^k \times \R^l$ more general setting, with $k$ and $l$ two positive integers. Space-time corresponds to the particular case $l=1$. The Gneiting class of covariances on $\R^k \times \R^l$ \citep{Gneiting2002} involves two functions, usually denoted $\varphi$ and $\psi$ and associated with the ``spatial'' (on $\R^k$) and the ``temporal''  (on $\R^l$) structures, respectively: 
\begin{theorem}[(Gneiting, 2002)]
Let $\sigma > 0$, $\varphi(t), t \geq 0$, a completely monotone function on $\R_{+}$ and $\psi(t), t \geq 0$, a Bernstein function. Then,
\begin{equation}
    C(\bh,\bu) = \frac{\sigma^2}{\psi(\lvert\bu\rvert^2)^{k/2}}\, \varphi \left( \frac{\lvert\bh\rvert^2}{\psi(\lvert\bu\rvert^2)}\right), \qquad (\bh,\bu) \in \R^k \times \R^l,
   \label{eq:G}
\end{equation}
 is a covariance on $\R^k \times \R^l$.
 \label{theorem1}
\end{theorem}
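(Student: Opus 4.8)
The plan is to show that $C$ in \eqref{eq:G} is a nonnegative combination (in fact a continuous mixture) of separable covariances, each of which is manifestly positive semi-definite, and then invoke closure of the class of covariance functions under mixtures and limits. The two key representations I would use are Bernstein's theorem \eqref{eq:Laplace} for $\varphi$ and the Gaussian integral identity
\begin{equation*}
\frac{1}{s^{k/2}}\,\exp\!\left(-\frac{\lvert\bh\rvert^2}{s}\right)
= \frac{1}{\pi^{k/2}}\int_{\R^k} \exp\!\left(-s\lvert\bomega\rvert^2\right)\,\cos\langle\bomega,\bh\rangle\,d\bomega ,
\qquad s>0,
\end{equation*}
which expresses the Gaussian kernel in $\bh$ with scale parameter $s$ as a Fourier transform of a Gaussian in $\bomega$. (Equivalently one writes $s^{-k/2}e^{-|\bh|^2/s}$ as the characteristic function of a centered Gaussian vector with variance $2/s$ times $(2\pi)^{-k/2}(\cdot)$; the constant bookkeeping is routine.)

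First I would apply \eqref{eq:Laplace} with the substitution $t = \lvert\bh\rvert^2/\psi(\lvert\bu\rvert^2)$, which gives
\begin{equation*}
\varphi\!\left(\frac{\lvert\bh\rvert^2}{\psi(\lvert\bu\rvert^2)}\right)
= \int_{\R^+} \exp\!\left(-\frac{r\lvert\bh\rvert^2}{\psi(\lvert\bu\rvert^2)}\right)\mu(dr).
\end{equation*}
Dividing by $\psi(\lvert\bu\rvert^2)^{k/2}$ and absorbing the scale $r$ into the Gaussian identity above (i.e. applying it with $s = \psi(\lvert\bu\rvert^2)/r$, up to a power of $r$ that can be pushed into the measure), I obtain a representation of $C(\bh,\bu)$ as an integral over $r\in\R^+$ and $\bomega\in\R^k$ of terms of the form $\cos\langle\bomega,\bh\rangle\,\exp(-r^{-1}\psi(\lvert\bu\rvert^2)\lvert\bomega\rvert^2)$, against a nonnegative measure. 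The spatial factor $\cos\langle\bomega,\bh\rangle$ is a covariance in $\bh$ for each fixed $\bomega$. So it remains to check that, for each fixed $\bomega$ and $r$, the temporal factor $\bu\mapsto\exp(-c\,\psi(\lvert\bu\rvert^2))$ with $c = r^{-1}\lvert\bomega\rvert^2\ge 0$ is a covariance on $\R^l$.

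The main obstacle is precisely this last point, and it is where the Bernstein property of $\psi$ is used. By Schoenberg's theorem, $\bu\mapsto\exp(-c\,\psi(\lvert\bu\rvert^2))$ is a covariance on $\R^l$ for every $l$ and every $c\ge 0$ if and only if $\psi(\lvert\bu\rvert^2)$ is a variogram (conditionally negative definite) on $\R^l$ for every $l$, which in turn holds if and only if $\psi$ is a Bernstein function — this is the classical Schoenberg correspondence between completely monotone and negative-definite radial functions, together with the Lévy–Khinchine-type representation \eqref{eq:btn}. Concretely, using \eqref{eq:btn} one writes $e^{-c\psi(t)} = e^{-ca}e^{-cbt}\prod$-type limit of $\exp\!\big(-c\int(1-e^{-tx})\nu(dx)\big)$, and each factor $e^{-cbt}$ (Gaussian-type, legitimate since $b\ge0$) and $\exp(-c(1-e^{-tx}))$ (a compound-Poisson characteristic function, hence completely monotone in $t$) is completely monotone in $t$; a product of completely monotone functions is completely monotone, so $e^{-c\psi(\lvert\bu\rvert^2)}$ is a radial completely monotone function of $\lvert\bu\rvert^2$ and hence a covariance on $\R^l$ by Schoenberg's theorem.

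Finally I would assemble the pieces: $C(\bh,\bu)$ has been written, up to the positive constant $\sigma^2$ and constants from the Gaussian identity, as
\begin{equation*}
C(\bh,\bu) = \sigma^2 \int_{\R^+}\!\!\int_{\R^k}
\cos\langle\bomega,\bh\rangle\;\exp\!\left(-\tfrac{\lvert\bomega\rvert^2}{r}\,\psi(\lvert\bu\rvert^2)\right)\,
\widetilde\mu(dr,d\bomega),
\end{equation*}
with $\widetilde\mu$ a nonnegative measure (the image of $\mu(dr)$ times the Gaussian density in $\bomega$, with the $r$-powers absorbed). Each integrand is a product of a covariance in $\bh$ and a covariance in $\bu$, hence a covariance on $\R^k\times\R^l$; the integral against a nonnegative measure is then positive semi-definite as a limit of nonnegative finite combinations of covariances, provided one checks the integral converges (which follows from $\varphi(0)<\infty$ and $\psi\ge 0$, guaranteeing $C(\bzero,\bzero)=\sigma^2\varphi(0)/\psi(0)^{k/2}<\infty$ when $\psi(0)>0$; the case $\psi(0)=0$, i.e. $a=0$, is handled by noting $\psi(\lvert\bu\rvert^2)>0$ for $\bu\ne\bzero$ and treating the value at $\bu=\bzero$ as a limit, or simply by the fact that \eqref{eq:G} is then interpreted as its continuous extension). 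This establishes that $C$ is a covariance on $\R^k\times\R^l$.
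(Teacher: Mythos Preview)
Your argument is correct, and the key integral representation you derive is precisely Lemma~1 of the paper (equation~\eqref{eq:emery111}), while your ``main obstacle'' step---that $\bu\mapsto e^{-c\,\psi(|\bu|^2)}$ is positive definite on $\R^l$ whenever $\psi$ is Bernstein---is exactly Proposition~\ref{lemma4}. So at the level of ingredients you and the paper agree.

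Where the route differs is in how these ingredients are assembled. The paper does not give a direct proof of Theorem~\ref{theorem1}; it cites Gneiting~(2002) and instead offers a \emph{constructive} proof of the extended class~\eqref{eq:G2}: Proposition~\ref{lemma4} reduces the original class to the extended one, and then Theorem~\ref{theorem2} exhibits an explicit random field (the substitution construction~\eqref{eq:substitution1}) whose covariance is computed in Appendix~B to equal~\eqref{eq:G2}. Your proof is the more classical mixture argument: write $C$ as a nonnegative superposition of separable covariances $\cos\langle\bomega,\bh\rangle\cdot e^{-c\,\psi(|\bu|^2)}$ and conclude by closure under mixtures. Your approach is shorter and self-contained; the paper's approach has the advantage of simultaneously yielding a simulation algorithm and of working for any variogram $\gamma$ (even discontinuous ones without a spectral representation), not just those of the form $\psi(|\bu|^2)-1$.

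One small point worth tightening: when you write ``up to a power of $r$ that can be pushed into the measure,'' the cleanest way to handle the $r^{k/2}$ factor is a change of variable $\bomega\mapsto\sqrt{r}\,\tilde\bomega$ in the Gaussian integral, after which the $r$-power disappears and you land exactly on~\eqref{eq:emery111}; this avoids any integrability worry about $r^{k/2}\mu(dr)$.
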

Notice that the function defined in \eqref{eq:G} is fully symmetric, i.e. 
$C(\bh,\bu) = C(-\bh,\bu) = C(\bh,-\bu) = C(-\bh,-\bu)$.
Without loss of generality, we suppose from now on that $C(\bzero,0)=1$. Hence, following \cite{Gneiting2002}, one can assume that $\sigma^2=1$, $\psi(0)=a=1$ and that the measure $\mu$ defined in \eqref{eq:Laplace} is a probability measure. Unless specified otherwise, we shall also consider throughout that $\mu$ has no atom at 0, which implies that $\lim_{t \to \infty} \varphi(t) =0$.

\subsection{The extended Gneiting class}
\label{sec:extended}
The temporal structure is parameterized by the function $\psi$. From Gneiting's theorem, a sufficient condition on 
$\psi$ for the function given in \eqref{eq:G} to be a space-time covariance is that $\psi(t)$ must be a Bernstein function. Such a function leads to a variogram that is valid for any dimension $l \in \mathbb{N}^*$, a result that was established in \cite{zastavnyi2011characterization}. For the sake of completeness, we provide here a more direct proof.

\begin{proposition}
\label{lemma4}
For any positive integer $l$ and any Bernstein function $\psi$ on $\R_+$, the function $\gamma$ defined by 
$$\gamma(\bu) := \psi(\lvert\bu\rvert^2)-\psi(0) =\psi(\lvert\bu\rvert^2)- 1,\ \bu \in \R^l,$$ 
is a variogram on $\R^l$.
\end{proposition}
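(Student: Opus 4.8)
The plan is to use the Lévy–Khintchine-type representation \eqref{eq:btn} of the Bernstein function $\psi$ to write $\gamma$ explicitly as a nonnegative combination of elementary variograms, and then to recall the classical characterization of variograms via conditional negative definiteness. Recall that a function $\gamma$ on $\R^l$ with $\gamma(\bzero)=0$ is a variogram if and only if $-\gamma$ is conditionally negative definite, i.e. $\sum_{i,j} \lambda_i \lambda_j \gamma(\bu_i - \bu_j) \le 0$ for all finite collections $\bu_1,\dots,\bu_n \in \R^l$ and all real weights $\lambda_1,\dots,\lambda_n$ with $\sum_i \lambda_i = 0$. Since this class of functions is a convex cone that is closed under pointwise limits, it suffices to check the property on each ``building block'' appearing in the integral representation of $\psi$.

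Concretely, applying \eqref{eq:btn} with $t = \lvert\bu\rvert^2$ gives
\begin{equation}
\gamma(\bu) = \psi(\lvert\bu\rvert^2) - \psi(0) = b\,\lvert\bu\rvert^2 + \int_0^{+\infty}\bigl(1 - e^{-x\lvert\bu\rvert^2}\bigr)\,\nu(dx),
\label{eq:gammadecomp}
\end{equation}
where $b \ge 0$ and $\nu$ is the Lévy measure. The first step is to observe that $\bu \mapsto \lvert\bu\rvert^2$ is a variogram on $\R^l$ for every $l$ (it corresponds to Brownian-motion-type behaviour; conditional negative definiteness follows from $\sum_{i,j}\lambda_i\lambda_j\lvert\bu_i-\bu_j\rvert^2 = -2\lvert\sum_i \lambda_i \bu_i\rvert^2 \le 0$ when $\sum_i\lambda_i = 0$). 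The second step is to observe that for each fixed $x > 0$, the map $\bu \mapsto 1 - e^{-x\lvert\bu\rvert^2}$ is a variogram on $\R^l$: indeed $e^{-x\lvert\bu\rvert^2}$ is a Gaussian covariance function on $\R^l$ (valid in every dimension by Schoenberg's theorem, since $t\mapsto e^{-xt}$ is completely monotone), and $1 - (\text{stationary covariance})$, suitably normalized, is always a variogram.

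The third step is simply to assemble: $\gamma$ in \eqref{eq:gammadecomp} is a nonnegative scalar $b$ times a variogram, plus a $\nu$-mixture (an ``integral convex combination'') of variograms. The integrability condition $\int_0^\infty \min(1,x)\,\nu(dx) < \infty$ guarantees the integral converges for every $\bu$ (using $1 - e^{-x\lvert\bu\rvert^2} \le \min(1, x\lvert\bu\rvert^2)$), so $\gamma$ is finite-valued; and since the cone of variograms is closed under nonnegative linear combinations, integration against a positive measure, and pointwise limits, $\gamma$ is itself a variogram on $\R^l$. Finally $\gamma(\bzero) = 0$ is immediate from $\psi(0) = \psi(0)$.

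I do not expect a serious obstacle here; the only point requiring mild care is the passage from finite convex combinations to the integral $\int_0^\infty(1 - e^{-x\lvert\bu\rvert^2})\,\nu(dx)$, where one should justify that conditional negative definiteness survives integration — this is routine (apply the defining inequality inside the integral and use Tonelli/dominated convergence together with the integrability of $\min(1,x)$ against $\nu$) but worth stating cleanly. Everything else reduces to the two elementary facts that $\lvert\bu\rvert^2$ and $1 - e^{-x\lvert\bu\rvert^2}$ are variograms in arbitrary dimension.
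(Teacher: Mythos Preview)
Your proof is correct, but it proceeds by a different route from the paper. The paper does not use the L\'evy--Khintchine representation \eqref{eq:btn} at all; instead it invokes the Schoenberg-type characterization that $\gamma$ is a variogram on $\R^l$ if and only if $e^{-s\gamma}$ is positive definite for every $s>0$. Since $\psi$ is Bernstein, the composition $t\mapsto e^{-s(\psi(t)-\psi(0))}$ is completely monotone, hence the Laplace transform of a probability measure $\nu_s$; writing $e^{-s\gamma(\bu)}=\int_{\R_+} e^{-r\lvert\bu\rvert^2}\,\nu_s(dr)$ exhibits it as a mixture of Gaussian covariances, and the conclusion follows.

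The trade-off is this: your argument is more explicit and constructive --- it displays $\gamma$ directly as a superposition of the elementary variograms $\lvert\bu\rvert^2$ and $1-e^{-x\lvert\bu\rvert^2}$, which in particular makes the spectral measure of $\gamma$ visible and ties neatly into the later discussion of $\sm$. The paper's argument is shorter, needs only the abstract property ``$\psi$ Bernstein $\Rightarrow$ $e^{-s\psi}$ completely monotone'' rather than the full integral representation, and yields as a by-product the Gaussian-mixture form of $e^{-s\gamma}$. Both arguments are standard; yours requires only the mild care you already flagged (passing conditional negative definiteness through the $\nu$-integral), which is indeed routine via Tonelli and the bound $1-e^{-x\lvert\bu\rvert^2}\le\min(1,x\lvert\bu\rvert^2)$.
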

\begin{proof}
For fixed $s > 0$, the function $e^{-s (\psi(t)-\psi(0))}, t \ge 0$, is completely monotone on $\R_+$ \citep{Feller1966}. Equivalently, it is the Laplace transform of a probability measure $\nu_s$:
\begin{equation*}
e^{-s (\psi(t)-\psi(0))} =  \int_{\R^+}e^{-rt} \nu_s(dr).
\end{equation*}
Therefore, $e^{-s \gamma(\bu)}, \bu \in \R^l$, is a covariance on $\R^l$, since it is  a mixture of Gaussian covariances: 
\begin{equation*}
e^{-s \gamma(\bu)} = \int_{\R^+}e^{-r \lvert\bu\rvert^2} \nu_s(dr).
\end{equation*}
This implies that $\gamma(\bu), \bu \in \R^l$, is a variogram on $\R^l$ \citep{Schoenberg1938b, Chiles2012}.
\end{proof}

Accordingly, the Gneiting covariance \eqref{eq:G} belongs to the more general class of functions of the form
\begin{equation}
    C(\bh,\bu) = \frac{1}{(\gamma(\bu)+1)^{k/2}}\varphi\left( \frac{\lvert\bh\rvert^2}{\gamma(\bu)+1}\right), \qquad (\bh,\bu) \in \R^k \times \R^l,
    \label{eq:G2}
\end{equation}
where $\varphi$ is a completely monotone function on $\R_{+}$ and $\gamma$ a continuous variogram on $\R^l$. \cite{zastavnyi2011characterization} have shown that every member of this extended class is a valid covariance on $\R^k \times \R^l$. From a modeling point of view, the formulation \eqref{eq:G2} offers much more flexibility than \eqref{eq:G}, which only allows for isotropic variograms associated with Bernstein functions as per Proposition \ref{lemma4}. In particular, for space-time applications, any one-dimensional variogram is admissible, including nonmonotonic variograms. In the remainder of this work, we will adopt the  formulation \eqref{eq:G2}, hereafter referred to as the extended Gneiting class of covariance functions or the class of Gneiting-type covariance functions. The substitution approach presented in Section \ref{sec:substitution} will provide an alternative constructive proof of this result.

In this context, the Lévy representation \eqref{eq:btn} turns out to be unnecessarily restrictive. Similar to Bochner's theorem for covariance functions, the spectral representation of the variogram \citep{kmv61,ylm57} states that a continuous function $\gamma \neq 0$ on $\R^l$ is a variogram if and only if
\begin{equation}\label{eq:gmasr}
\gamma (\bu) = Q(\bu) + \int_{\R^l} \bigl[ 1 - \cos (\langle \bu, \bx \rangle) \bigr] \, \sm ( d\bx ), \qquad \bu \in \R^l,
\end{equation}
where $Q(\bu)$ is a nonnegative quadratic form  and where the spectral measure $\sm$ is positive, symmetric, without an atom at the origin, and satisfies
\begin{equation}\label{eq:bndsm}
\int_{\R^l} \frac{|\bx|^2 \, \sm ( d\bx )}{1 + |\bx|^2} < \infty.
\end{equation}
According to Proposition 4.5 in \cite{matheron1972leccon},  $Q(\bu)=0$ if and only if $\gamma(\bu)/|\bu|^2 \to 0$ as  $|\bu| \to \infty$. For the sake of simplicity and following common usage in spatial statistics, we will assume throughout that $Q(\bu)=0$.

The temporal covariance function associated with \eqref{eq:G2}
\begin{equation}
C_T(\bu) = C(\bzero,\bu) = \bigl( \gamma(\bu) + 1 \bigr)^{-k/2},
\label{eq:C_T}
\end{equation}
is not necessarily integrable. For example, this occurs when $\gamma(\bu)$ is bounded, since in this case $C_T(\bu)$ does not tend to zero as $\lvert \bu \rvert$ tends to infinity. This prevents us using properties of the Fourier transform to establish the existence of a spectral density in all generality. Theorem \ref{thm=dens-spec} is a characterization result, which provides {a} necessary and sufficient condition for the existence of a spectral density for $C$.

\begin{theorem}\label{thm=dens-spec} Assume that the spectral measure $\sm$ is absolutely continuous. Then the covariance $C$ has a spectral density if and only if  $ \sm(\R^l)=\int_{\R^l} \sm(d \bx)=+\infty$.
\end{theorem}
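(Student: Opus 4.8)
The plan is to construct the spectral measure $F$ of $C$ explicitly — since $C$ need not be integrable, an inverse Fourier transform is unavailable — and then to decide when $F$ is absolutely continuous. Starting from Bernstein's representation \eqref{eq:Laplace}, $\varphi(t)=\int_{\R^+}e^{-rt}\,\mu(dr)$ with $\mu$ a probability measure without atom at $0$, I would combine it with the elementary Gaussian identity
\[ s^{-k/2}e^{-r|\bh|^2/s}=\int_{\R^k}e^{i\langle\bh,\bomega\rangle}\,\frac{e^{-s|\bomega|^2/(4r)}}{(4\pi r)^{k/2}}\,d\bomega, \]
applied with $s=\gamma(\bu)+1$ so that the prefactor of \eqref{eq:G2} cancels, and with Schoenberg's correspondence $e^{-\theta\gamma(\bu)}=\int_{\R^l}e^{i\langle\bu,\by\rangle}\,g_\theta(d\by)$, valid for every $\theta\ge0$ because $\gamma$ is a variogram ($g_\theta$ a probability measure on $\R^l$). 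Setting $\theta=|\bomega|^2/(4r)$ and applying Tonelli (the total mass being $C(\bzero,\bzero)=1$), this yields
\[ C(\bh,\bu)=\int_{\R^k}\int_{\R^l}e^{i(\langle\bh,\bomega\rangle+\langle\bu,\by\rangle)}\,F(d\bomega,d\by),\qquad F(d\bomega,d\by)=d\bomega\int_{\R^+}\frac{e^{-\theta}}{(4\pi r)^{k/2}}\,g_\theta(d\by)\,\mu(dr), \]
and by uniqueness in Bochner's theorem $F$ is the spectral measure of $C$, so the statement reduces to deciding when $F$ is absolutely continuous on $\R^{k+l}$.

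For the ``only if'' implication I would argue on the temporal margin. If $\sm(\R^l)=M<\infty$ then $\gamma(\bu)=\int_{\R^l}[1-\cos\langle\bu,\bx\rangle]\,\sm(d\bx)\le 2M$ is bounded, hence $C_T(\bu)=(\gamma(\bu)+1)^{-k/2}\ge(2M+1)^{-k/2}>0$ does not tend to $0$ at infinity. Were $C$ to admit a spectral density $f$, then $\by\mapsto\int_{\R^k}f(\bomega,\by)\,d\bomega$ would be an integrable spectral density of $C_T$, and Riemann--Lebesgue would force $C_T(\bu)\to0$, a contradiction. (Equivalently, one reads off from the display above that when $M<\infty$ the measure $g_\theta$ carries an atom $e^{-\theta M}\delta_{\bzero}$, giving $F$ a positive mass on the Lebesgue-null set $\R^k\times\{\bzero\}$.)

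For the ``if'' implication, assume $\sm(\R^l)=\infty$; the heart of the matter is to show $g_\theta$ is absolutely continuous for every $\theta>0$. Since $\sm$ is absolutely continuous, non-atomic and $\sigma$-finite (its density is finite a.e.\ by \eqref{eq:bndsm}), for each $A>0$ I would split $\sm=\sm_A+\tilde\sm_A$ with $\sm_A$ a finite symmetric measure of mass $A$ (e.g.\ $\sm$ restricted to a suitable symmetric annulus); this decomposes $\gamma=\gamma_A+\tilde\gamma_A$ into two variograms and hence factors $g_\theta=g_\theta^A*\tilde g_\theta^A$. Now $g_\theta^A=e^{-\theta A}\sum_{n\ge0}\frac{\theta^n}{n!}\,\sm_A^{*n}$ is a compound Poisson law, so $g_\theta^A=e^{-\theta A}\delta_{\bzero}+(\text{absolutely continuous measure of mass }1-e^{-\theta A})$ because $\sm_A^{*n}$ is absolutely continuous for every $n\ge1$; convolving with $\tilde g_\theta^A$ gives $g_\theta=(\text{a.c.})+e^{-\theta A}\tilde g_\theta^A$, whence the singular part of $g_\theta$ has total variation $\le e^{-\theta A}$. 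Letting $A\to\infty$ shows $g_\theta$ is absolutely continuous. Finally, for Lebesgue-a.e.\ $\bomega$ (namely $\bomega\ne\bzero$) one has $\theta=|\bomega|^2/(4r)>0$ for $\mu$-a.e.\ $r$, so $\nu_\bomega(d\by):=\int_{\R^+}\frac{e^{-\theta}}{(4\pi r)^{k/2}}\,g_\theta(d\by)\,\mu(dr)$ is a mixture of absolutely continuous measures, hence absolutely continuous in $\by$; choosing the densities of the $g_\theta$ jointly measurable in $(\theta,\by)$ and applying Tonelli, $F(d\bomega,d\by)=d\bomega\,\nu_\bomega(d\by)$ has an $L^1(\R^{k+l})$ density, i.e.\ $C$ has a spectral density.

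I expect the main obstacle to be the ``if'' direction, specifically the assertion that $e^{-\theta\gamma}$ is the characteristic function of an absolutely continuous law as soon as the (absolutely continuous) spectral measure $\sm$ is infinite: the peeling argument with the total-variation estimate is the crux, and it rests on the fact that any convolution involving an absolutely continuous factor is absolutely continuous. A secondary, more bookkeeping, point is the passage from ``$g_\theta$ absolutely continuous for each $\theta>0$'' to ``$F$ absolutely continuous on $\R^{k+l}$'', which needs joint measurability of the densities $p_\theta$ and a Tonelli argument. The ``only if'' direction, by contrast, is essentially immediate once $\gamma$ is seen to be bounded.
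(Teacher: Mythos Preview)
Your argument is correct and structurally parallel to the paper's: both build the spectral measure $F$ through the same conditional factorisation (mix over $r$ via $\mu$, Gaussian spatial marginal with density $(4\pi r)^{-k/2}e^{-|\bomega|^2/(4r)}$, and temporal conditional $g_\theta$ with characteristic function $e^{-\theta\gamma}$, $\theta=|\bomega|^2/(4r)$), and both reduce the question to whether $g_\theta$ is absolutely continuous. The difference lies in how that last point is settled. The paper observes that $g_\theta$ is infinitely divisible with L\'evy measure $\theta\sm$ and, when $\sm(\R^l)=\infty$, invokes Lemma~1 of \cite{sato82} to conclude absolute continuity; your peeling argument (split off a finite symmetric piece $\sm_A$, use the compound Poisson expansion to bound the singular mass of $g_\theta$ by $e^{-\theta A}$, let $A\to\infty$) is in effect a short self-contained proof of that lemma in the case needed here, which makes your write-up independent of an external reference. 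For the converse, the paper argues exactly as in your parenthetical remark (the compound Poisson law $g_\theta$ has an atom $e^{-\theta\sm(\R^l)}\delta_{\bzero}$, so $F$ charges $\R^k\times\{\bzero\}$); your additional Riemann--Lebesgue argument via the temporal margin $C_T$ is a pleasant alternative that avoids looking inside $g_\theta$ at all.
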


This theorem, proven in Appendix A, relies on the conditional decomposition of the spectral measure presented in Section \ref{sec:sampling_sm}.  A Gneiting covariance from the original class, i.e. defined by \eqref{eq:G}, always admits the representation \eqref{eq:gmasr} where $\sm$ is an absolutely continuous measure such that $\sm(\R^l)=\nu(\R_+)$, where $\nu$ is the Lévy measure defined in \eqref{eq:btn}. Hence Theorem 2 offers a full characterization of the subclass of original Gneiting covariances with a spectral density. It is interesting to make a link between Theorem 2 and bounded variograms. First of all, we establish the following proposition.
\begin{proposition}\label{prp:bnd}
A variogram $\gamma(\bu)$ is bounded and thus associated with a covariance function $c(\bu)$ with $\gamma(\bu)=c(\bzero)-c(\bu)$, if and only if $\sm(\R^l)=\int_{\R^l} \sm(d \bx) < +\infty$.

\label{prop:bounded}
\end{proposition}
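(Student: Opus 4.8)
My plan is to derive both implications directly from the spectral representation \eqref{eq:gmasr} of the variogram, which under the standing assumption $Q\equiv 0$ reads $\gamma(\bu)=\int_{\R^l}[1-\cos(\langle\bu,\bx\rangle)]\,\sm(d\bx)$.

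The implication ``$\sm(\R^l)<+\infty\ \Rightarrow\ \gamma$ bounded'' is immediate, since $0\le 1-\cos(\langle\bu,\bx\rangle)\le 2$ forces $\gamma(\bu)\le 2\,\sm(\R^l)$ for every $\bu$. In that case I would moreover introduce $c(\bu):=\int_{\R^l}\cos(\langle\bu,\bx\rangle)\,\sm(d\bx)$, which is well defined and continuous by dominated convergence, and positive semi-definite: for $\bu_1,\dots,\bu_n$ and $a_1,\dots,a_n\in\R$, the symmetry of $\sm$ gives $\sum_{i,j}a_ia_j\,c(\bu_i-\bu_j)=\int_{\R^l}\bigl|\sum_i a_i e^{\mathrm{i}\langle\bu_i,\bx\rangle}\bigr|^2\,\sm(d\bx)\ge 0$. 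Hence $c$ is a covariance function and $\gamma(\bu)=c(\bzero)-c(\bu)$ with $c(\bzero)=\sm(\R^l)$, which is the representation asserted in the statement.

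For the converse, suppose $\gamma(\bu)\le M<+\infty$ for all $\bu\in\R^l$. The subtle point is that the moment condition \eqref{eq:bndsm} only controls the mass of $\sm$ away from the origin and a priori allows $\sm$ to carry infinite mass accumulating at $\bzero$; boundedness of $\gamma$ is precisely what rules this out. To exploit it I would average $\gamma$ against a Gaussian kernel with explicit Fourier transform: for $\sigma>0$, let $g_\sigma$ be the density of the centred Gaussian law on $\R^l$ with covariance matrix $\sigma^2\bI$, so that $\int_{\R^l}\cos(\langle\bu,\bx\rangle)\,g_\sigma(\bu)\,d\bu=e^{-\sigma^2|\bx|^2/2}$. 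Since the integrand $1-\cos(\langle\bu,\bx\rangle)$ is nonnegative and $\sm$ is $\sigma$-finite (finite away from $\bzero$ by \eqref{eq:bndsm}), Tonelli's theorem allows exchanging the two integrations, giving
\[
\int_{\R^l}\bigl(1-e^{-\sigma^2|\bx|^2/2}\bigr)\,\sm(d\bx)=\int_{\R^l}\gamma(\bu)\,g_\sigma(\bu)\,d\bu\le M .
\]
Letting $\sigma\to+\infty$, the integrand $1-e^{-\sigma^2|\bx|^2/2}$ increases pointwise to $1$ on $\R^l\setminus\{\bzero\}$, so monotone convergence yields $\sm(\R^l\setminus\{\bzero\})\le M$; as $\sm$ has no atom at the origin, $\sm(\R^l)\le M<+\infty$. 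Combined with the first part, this also delivers the covariance representation of $\gamma$.

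The only real obstacle is this converse, and it is entirely contained in the remark above: one must exclude infinite spectral mass near the origin, which \eqref{eq:bndsm} alone cannot do. The Gaussian kernel is convenient because its characteristic function is positive and explicit, so the exchanged integral is manifestly of the form $\int_{\R^l}(1-e_\sigma)\,\sm(d\bx)$ with $0\le e_\sigma\le 1$ decreasing to the indicator of $\{\bzero\}$ — any approximate identity with nonnegative Fourier transform would do. The remaining ingredients (Tonelli, monotone convergence, and the Bochner-type positivity in the easy direction) are routine.
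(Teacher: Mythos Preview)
Your proof is correct. The forward implication is handled essentially as in the paper: both define $c(\bu)=\int_{\R^l}\cos(\langle\bu,\bx\rangle)\,\sm(d\bx)$ and identify it as a covariance function (the paper cites Bochner's theorem, you verify positive semi-definiteness by hand).

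For the converse, however, you take a genuinely different route. The paper argues that if $\gamma$ is bounded then $c(\bu)=B-\gamma(\bu)$ is a covariance for some finite $B$, invokes Bochner's theorem to obtain a finite spectral measure $\eta$ for $c$, rewrites $\gamma(\bu)=\int_{\R^l}(1-\cos\langle\bu,\bx\rangle)\,\eta(d\bx)$, and then appeals to uniqueness of the spectral representation \eqref{eq:gmasr} to conclude $\eta=\sm$, hence $\sm(\R^l)=B<\infty$. Your argument works directly with the given representation: Gaussian averaging plus Tonelli converts the uniform bound on $\gamma$ into $\int_{\R^l}(1-e^{-\sigma^2|\bx|^2/2})\,\sm(d\bx)\le M$, and monotone convergence as $\sigma\to\infty$ finishes. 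Your approach is more self-contained---it sidesteps both the fact that a bounded variogram can be written as a constant minus a covariance and the uniqueness of the L\'evy--Khintchine-type decomposition---at the price of a short explicit computation. The paper's approach is terser but leans on those two external ingredients.
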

\begin{proof}
First, assume that the symmetric measure $\sm$  is finite,  that is $\sm(\R^l)=A<+\infty$. Then, \eqref{eq:gmasr} becomes 
$$\gamma(\bu) = A - \int_{\R^l} \cos (\langle \bu, \bx \rangle) \sm (d \bx) =c(\bzero)-c(\bu)
$$
where by Bochner's theorem, $c(\bu)=\int_{\R^l} \cos (\langle \bu, \bx \rangle) \sm (d \bx)$ is a covariance function. The variogram $\gamma$ is thus bounded. 
Reciprocally, assume that the variogram  $\gamma$ is bounded, so that  $c(\bu) = B - \gamma(\bu)$ is a  covariance function for some finite value $B$; its Bochner representation is $c(\bu) = \int_{\R^l}  \cos (\langle \bu, \bx \rangle) \eta(d\bx)$ with  $\eta(\R^l) = B$. Hence, $\gamma(\bu) = B - c(\bu) = \int_{\R^l} \bigl( 1 -   \cos (\langle \bu, \bx \rangle)  \bigr) \eta(d\bx)$ and a direct comparison with \eqref{eq:gmasr} leads to $\eta(d\bx) = \sm (d\bx)$. Accordingly, 
$\sm (\R^l) = B < +\infty$. 
\end{proof}
Putting  Theorem \ref{thm=dens-spec} and Proposition \ref{prop:bounded} together, we establish the following: an extended Gneiting covariance function admits a spectral density  when $\gamma(\bu)$ is unbounded and its spectral measure $\sm$ is absolutely continuous. Examples of pairs $(\gamma,\sm)$ are given in Table \ref{tab:vario}. 

\medskip

We now recall Bochner's theorem \citep{Bochner1955}, according to which a continuous function $C$ on $\R^k \times \R^l$ is positive semi-definite, hence a covariance function, if and only if
\begin{equation}
    C(\bh,\bu) =  \int_{\R^k} \int_{\R^l} e^{i\langle \bomega,\bh \rangle+ i \langle \btau, \bu \rangle }\, F(d\bomega,d\btau), \qquad (\bh,\bu) \in \R^k \times \R^l,
    \label{eq:BochnerF}
\end{equation}
where $F$ is a nonnegative finite, symmetric, measure on $\R^k \times \R^l$, known as the spectral measure of $C$. Note that since the extended Gneiting class is fully symmetric, the spectral measure $F$ must also be fully symmetric. We finish this Section with a lemma that will be useful for the simulation algorithm proposed in Section \ref{sec:substitution}.
\begin{lemma}
The extended Gneiting covariance function defined in \eqref{eq:G2} can be written as follows:
\begin{equation}
C(\bh,\bu) = 
\frac{1}{{(2\pi)}^{k/2}} \int_{\R_+} \int_{\R^k} \cos \left(\sqrt{2r} \, \langle \tilde{\bomega},\bh \rangle \right) \exp \left( -\frac{\lvert\tilde{\bomega}\rvert^2 (\gamma(\bu)+1)}{2} \right) d\tilde{\bomega} \, \mu(dr), \qquad (\bh,\bu) \in \R^k \times \R^l,
\label{eq:emery111}
\end{equation}
where $\mu$ is the measure associated with $\varphi$, as defined in \eqref{eq:Laplace}.
\end{lemma}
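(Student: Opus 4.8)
The plan is to combine Bernstein's representation of $\varphi$ with the elementary fact that, up to normalization, a Gaussian density is its own Fourier transform, so that the only genuine content is a one-line Gaussian integral together with an application of Tonelli's theorem.

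First I would use \eqref{eq:Laplace} with $t=\lvert\bh\rvert^2/(\gamma(\bu)+1)$, noting that $\gamma(\bu)+1\ge 1>0$ because $\gamma\ge 0$, to rewrite \eqref{eq:G2} as
$$C(\bh,\bu)=\frac{1}{(\gamma(\bu)+1)^{k/2}}\int_{\R_+}\exp\!\left(-\frac{r\lvert\bh\rvert^2}{\gamma(\bu)+1}\right)\mu(dr).$$
Next I would record the Gaussian identity, valid for every $a>0$, every $r\ge 0$ and every $\bh\in\R^k$:
$$\frac{1}{(2\pi)^{k/2}}\int_{\R^k}\cos\!\left(\sqrt{2r}\,\langle\tilde{\bomega},\bh\rangle\right)\exp\!\left(-\frac{\lvert\tilde{\bomega}\rvert^2 a}{2}\right)d\tilde{\bomega}=\frac{1}{a^{k/2}}\exp\!\left(-\frac{r\lvert\bh\rvert^2}{a}\right),$$
which follows because the odd (sine) part of $e^{i\sqrt{2r}\langle\tilde{\bomega},\bh\rangle}$ integrates to zero, while the even part is the characteristic function, evaluated at $\sqrt{2r}\,\bh$, of the centered Gaussian law on $\R^k$ with covariance matrix $a^{-1}$ times the identity (equivalently, one completes the square coordinate by coordinate). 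Taking $a=\gamma(\bu)+1$ makes the integrand coincide with the one appearing in \eqref{eq:emery111}.

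Finally I would substitute this identity into the displayed formula for $C$ and swap the $r$- and $\tilde{\bomega}$-integrations. This is legitimate by Tonelli's theorem: bounding $\lvert\cos(\cdot)\rvert\le 1$ produces the finite majorant $\int_{\R_+}\bigl(2\pi/(\gamma(\bu)+1)\bigr)^{k/2}\mu(dr)=\bigl(2\pi/(\gamma(\bu)+1)\bigr)^{k/2}<\infty$, since $\mu$ is a probability measure. The interchange then yields \eqref{eq:emery111}. I do not anticipate a real obstacle here; the only steps deserving a word of justification are the Gaussian Fourier computation and the use of Tonelli, and the representation is deliberately written with a cosine rather than a complex exponential because that is the form consumed by the simulation algorithm developed in Section~\ref{sec:substitution}.
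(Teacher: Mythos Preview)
Your proof is correct and follows essentially the same route as the paper: both hinge on recognizing that the inner $\tilde{\bomega}$-integral is the Fourier transform of a Gaussian, yielding $(\gamma(\bu)+1)^{-k/2}\exp\bigl(-r\lvert\bh\rvert^2/(\gamma(\bu)+1)\bigr)$, and then invoking Bernstein's representation \eqref{eq:Laplace} to recover \eqref{eq:G2}. A minor remark: once you substitute the Gaussian identity (with $a=\gamma(\bu)+1$) inside the $\mu(dr)$-integral, the result is already \eqref{eq:emery111} in the stated order of integration, so no interchange is actually required and the appeal to Tonelli, while harmless, is superfluous.
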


\begin{proof}
The inner integral in the right-hand side of \eqref{eq:emery111} is, up to a multiplicative factor, the Fourier transform at $\sqrt{2r} \, \bh$ of the function $\tilde{\bomega} \mapsto \exp \left( -\frac{\lvert\tilde{\bomega}\rvert^2 (\gamma(\bu)+1)}{2} \right)$ and is equal to $\left(\frac{2\pi}{\gamma(\bu)+1}\right)^{k/2} \exp \left( -\frac{r \lvert\bh\rvert^2}{\gamma(\bu)+1} \right)$. Formula \eqref{eq:emery111} then follows from \eqref{eq:Laplace} and \eqref{eq:G2}.
\end{proof}
For the sake of clarity, and because most applications are in a space-time context, the presentation will be made in the particular case when $l=1$. Extensions to the more general case when $l \in \N^*$ will be discussed in Section \ref{sec:approaches}.

%
\section{Spectral Approach}
\label{sec:spectral}
%

Since we assumed $C(\bzero,0)=1$, the spectral  measure $F$ defined in \eqref{eq:BochnerF} is a probability measure. 
Let $ (\bOmega, \Tau)$ be a spectral vector distributed  {as} $F$. Let also $\Phi$ be a random phase that is uniform 
on $(0,2 \pi)$ and independent of $(\bOmega, \Tau)$. In its most basic form, the spectral method rests on the fact 
that the random field defined by 
\begin{equation}\label{eq:spectral0}
 Z(\bx,t) = \sqrt{2} \cos ( \langle \bOmega , \bx \rangle + \Tau t + \Phi ), \qquad ( \bx,t) \in \R^k \times \R,
 \end{equation}
has a zero mean and covariance $C$ \citep{Shinozuka1971}. By introducing a multiplicative factor $\sqrt{-2 \ln (U)}$, 
where $U$ is an independent uniform variable on $(0,1)$, the Box-Muller transformation \citep{Box1958} ensures that all marginal 
distributions of $Z$ are standard Gaussian. To go further and obtain a random field whose finite-dimensional distributions (not only the marginals) are approximately Gaussian, one option is to add and rescale many independent copies of the form \eqref{eq:spectral0}: 
\begin{equation}\label{eq:spectral00}
\tilde{Z} (\bx,t) = \sum_{j=1}^p \sqrt{\frac{-2 \ln(U_j)}{p}} \cos \left( \langle \bOmega_j, \bx \rangle + 
\Tau_j t + \Phi_j \right), \qquad (\bx,t) \in \R^k \times \R,
\end{equation}
where $p$ is a positive integer and $\bigl( (\bOmega_j,\Tau_j,\Phi_j,U_j): j = 1, \dots \, , p \bigr)$ are independent 
copies of $(\bOmega,\Tau,\Phi,U)$. Because of the central limit theorem, the finite-dimensional distributions of $\tilde 
Z$ tend to become multivariate Gaussian as $p$ tends to infinity \citep{Lantu2002}. Algorithm \ref{agm:stlsmn}, presented 
below, corresponds to this approach. In the rest of this work, ${\cal U} (a,b)$ denotes a uniform random variable on 
the interval $(a,b)$ and "$\sim$" means "is distributed as".

\medskip

\begin{algorithm}[htb]
\caption{Spectral simulation of a Gaussian random field with Gneiting type space-time covariance} 
\label{agm:stlsmn}
\begin{algorithmic}[1]
\REQUIRE $k$ and $p$
\REQUIRE $ F  $
\FOR{$j=1$ to $p$} 
\STATE Simulate $ (\bOmega_j, \Tau_j) \sim F$;
\STATE Simulate  $\Phi_j \sim {\cal U}(0, 2\pi)$;
\STATE Simulate  $U_j \sim {\cal U}(0,1)$; 
\ENDFOR
\STATE For each $(\bx,t) \in \R^k \times \R$ return $ \tilde{Z} (\bx,t) = \sum_{j=1}^p \sqrt{-2 \ln(U_j) / p} 
\, \cos \bigl( \langle \bOmega_j, \bx \rangle + \Tau_j t + \Phi_j \bigr) $.
\end{algorithmic}
\end{algorithm}
To be successfully implemented, Algorithm 1 requires the spectral measure to be simulated. This issue is discussed in the next section. 

\subsection{Sampling the spectral measure} \label{sec:sampling_sm}
The function $\varphi$ introduced in \eqref{eq:Laplace} is a mixture of exponential functions. The mixture 
parameter $r$ can be seen as deriving from a latent random variable $R$ with distribution $\mu$. Accordingly, 
a covariance $C$ belonging to the extended Gneiting class \eqref{eq:G2} is also a mixture of basic covariance functions  
$$ C (\bh, u) = \int_0^{+\infty} \frac{1}{\bigl(1+\gamma(u) \bigr)^{k/2}} \, \exp \left( - \frac{r \lvert 
\bh \rvert^2} {1+\gamma(u)} \right) \, \mu (d r) \equiv \int_0^{+\infty} C (\bh, u \mid r ) \, \mu (d r) , $$
and their spectral measures are related by the formula
$$  F ( d \bomega , d \tau ) = \int_0^{+\infty} F ( d \bomega , d \tau \mid r) \, \mu (d r) . $$
The proposed simulation algorithm relies on the factorization of the spectral measure $F ( d \bomega,
d \tau \mid r )$ into a spatial component and a conditional temporal component:
$$ F ( d \bomega, d \tau \mid r ) = F^{\phantom{|}}_S ( d \bomega \mid r) \, F^{\phantom{|}}_T ( d \tau \mid 
\bomega , r). $$
The following theorem, the proof of which is deferred to Appendix A, makes the expression of these components available 
via their Fourier transforms. 

\begin{theorem}\label{thm:cond}
Consider a Gneiting-type covariance function as given in \eqref{eq:G2}. The two following assertions holds.
\begin{enumerate}
\item If $ \bOmega (r) \sim  F^{\phantom{|}}_S ( d \bomega \mid r) $, then
\begin{equation}\label{eq:ftspl}
\E \left[ e^{\textstyle i \, \langle \bh, \bOmega (r) \rangle } \right] = e^{ \textstyle - r \, \vert \bh 
\vert^2} .
\end{equation}
\item If $ \Tau (\bomega, r) \sim F^{\phantom{|}}_T ( d \tau \mid \bomega , r) $, then 
\begin{equation}\label{eq:fttpl}
\E \left[ e^{ \textstyle i \, u \, \Tau(\bomega,r) } \right] = \exp \left(-\frac{\vert \bomega \vert^2 \, \gamma (u) 
}{4 r} \right) \qquad \bomega\text{-a.e.}
\end{equation}
\end{enumerate}
\end{theorem}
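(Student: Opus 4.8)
The plan is to reduce the statement to the elementary covariance obtained by conditioning on the latent mixing variable $R=r$,
$$C(\bh,u\mid r)=\frac{1}{(\gamma(u)+1)^{k/2}}\exp\!\left(-\frac{r\,|\bh|^2}{\gamma(u)+1}\right),$$
and then to read off its two spectral components directly from the mixture representation already proved in the Lemma, equation~\eqref{eq:emery111}. Concretely, in the inner Gaussian integral of \eqref{eq:emery111} I would substitute $\bomega=\sqrt{2r}\,\tilde\bomega$; this rewrites $(2\pi)^{-k/2}\cos(\sqrt{2r}\,\langle\tilde\bomega,\bh\rangle)\exp(-|\tilde\bomega|^2(\gamma(u)+1)/2)\,d\tilde\bomega$ as $\cos(\langle\bomega,\bh\rangle)\exp(-|\bomega|^2\gamma(u)/(4r))\,f_S(\bomega\mid r)\,d\bomega$, where $f_S(\bomega\mid r)=(4\pi r)^{-k/2}\exp(-|\bomega|^2/(4r))$. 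Folding the cosine by symmetry then yields $C(\bh,u)=\int_{\R_+}\!\int_{\R^k}\!\int_{\R}\cos(\langle\bomega,\bh\rangle+\tau u)\,F_T(d\tau\mid\bomega,r)\,f_S(\bomega\mid r)\,d\bomega\,\mu(dr)$, with $F_T(\cdot\mid\bomega,r)$ the symmetric law introduced below.

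For assertion~1 I would simply note that $f_S(\bomega\mid r)$ is the density of the $\cN(\bzero,2r\bI)$ distribution, whose characteristic function is $e^{-r|\bh|^2}$; equivalently, setting $u=0$ in the spectral representation of $C(\cdot\mid r)$ and using $\gamma(\bzero)=0$ gives $\E[e^{i\langle\bh,\bOmega(r)\rangle}]=C(\bh,0\mid r)=e^{-r|\bh|^2}$, so $F^{\phantom{|}}_S(d\bomega\mid r)=f_S(\bomega\mid r)\,d\bomega$.

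For assertion~2 the key observation is that, for fixed $\bomega$ and $r$, the function $u\mapsto\exp(-|\bomega|^2\gamma(u)/(4r))$ equals $1$ at $u=0$ and is positive semi-definite on $\R$: since $\gamma$ is a variogram, $e^{-s\gamma(u)}$ is a covariance for every $s>0$ --- precisely the Schoenberg argument used in the proof of Proposition~\ref{lemma4}, here with $s=|\bomega|^2/(4r)$. Hence it is the characteristic function of a symmetric probability measure $F^{\phantom{|}}_T(d\tau\mid\bomega,r)$, and the mixture identity above exhibits $F(d\bomega,d\tau\mid r)=F^{\phantom{|}}_S(d\bomega\mid r)\,F^{\phantom{|}}_T(d\tau\mid\bomega,r)$ as a spectral measure of $C(\cdot\mid r)$, giving $\E[e^{iu\Tau(\bomega,r)}]=\exp(-|\bomega|^2\gamma(u)/(4r))$. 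The ``$\bomega$-a.e.'' qualifier must stay, because the decomposition of a joint measure into spatial marginal and conditional law is unique only up to $F_S(\cdot\mid r)$-null sets: if $\tilde F_T$ were another version, then $\int_{\R^k}e^{i\langle\bomega,\bh\rangle}\big(\hat F_T(\bomega,u)-\widehat{\tilde F_T}(\bomega,u)\big)f_S(\bomega\mid r)\,d\bomega=0$ for all $\bh$, and injectivity of the Fourier transform together with $f_S>0$ forces the two characteristic functions to coincide for a.e.\ $\bomega$.

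I expect the main difficulty to be not the Gaussian integral, which is routine, but the measure-theoretic bookkeeping: invoking Fubini to interchange the $\tau$-, $\bomega$- and $r$-integrals in the mixture identity (legitimate since the integrand is dominated in modulus by the finite measure $f_S(\bomega\mid r)\,d\bomega\,\mu(dr)$), and pinning down ``the'' conditional spectral measure $F_T(\cdot\mid\bomega,r)$ --- which can be done either constructively, by declaring it to be the symmetric law with the stated characteristic function, or abstractly via regular conditional probabilities on the Polish space $\R^k\times\R$, the two agreeing $F_S$-a.e. Minor care is also needed when $\gamma$ is unbounded, to note that $\exp(-|\bomega|^2\gamma(u)/(4r))\le 1$ keeps every integral finite, and to upgrade ``a.e.\ $\bomega$'' from a fixed $u$ to all $u$ simultaneously by working along a countable dense set of values of $u$ and using continuity in $u$.
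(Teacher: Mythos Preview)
Your proposal is correct and follows essentially the same route as the paper: both identify $F_S(\cdot\mid r)$ as the $\cN(\bzero,2r\,\bI_k)$ law by setting $u=0$, and both hinge on the same Gaussian Fourier integral---your change of variables $\bomega=\sqrt{2r}\,\tilde\bomega$ in \eqref{eq:emery111} produces exactly the paper's comparison formula for $C(\bh,u\mid r)$. The only presentational difference is that the paper takes the disintegration $F(d\bomega,d\tau\mid r)=F_S(d\bomega\mid r)\,F_T(d\tau\mid\bomega,r)$ as given and then reads off the characteristic function of $F_T$ by Fourier injectivity, whereas you go the constructive way: you first invoke Schoenberg to certify that $u\mapsto\exp(-|\bomega|^2\gamma(u)/(4r))$ \emph{is} a characteristic function, declare $F_T$ to be the corresponding law, and only afterwards argue uniqueness via injectivity. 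The two orderings are equivalent; your version has the minor advantage of making explicit why a probability measure with the stated transform exists at all.
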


Eq. \eqref{eq:ftspl} shows that all the components of $\bOmega (r)$ are independent and normally distributed with zero mean and variance $2r$. The spatial spectral measure can therefore be easily simulated. The simulation of the conditional temporal spectral measure is more tricky. An adaptation of the shot-noise approach developed by \cite{bdn82} is proposed here. Consider the spectral measure $\sm$ of the variogram  introduced in Section \ref{sec:extended} (eqs. \eqref{eq:gmasr} and \eqref{eq:bndsm}).
As shown in Appendix A, 
it is possible to define a positive and  locally integrable function $\theta$ defined on $\R_+$, as well as a family of probability measures $\bigl( \sm_t: t > 0 \bigr)$ on $\R_+$, such that
\begin{equation}\label{eq:dcn}
\sm (d x) = \int_0^{+\infty} \sm_t ( d x) \, \theta (t) \, dt.
\end{equation}
Consider now a Poisson point process $\bigl( \Tau_n: n \geq 1 \bigr)$ with intensity function 
$ \lambda (t) = \frac{\lvert \bomega \rvert^2 \, \theta (t)}{4 r}$ on $\R_+$. Let us  independently assign to each $\Tau_n$, 
a random variable $X^{\phantom{|}}_{\Tau_n}$ distributed according to $\sm^{\phantom{|}}_{\Tau_n}$. Then, it is shown 
in Appendix A that the distribution of the random variable $\sum_{n \geq 1} X^{\phantom{|}}_{\Tau_n}$ coincides 
with that of $\Tau(\bomega,r)$.

This algorithm is generic, in the sense that it is applicable to any variogram $\gamma$. Now, it should be pointed out that it can be only approximately implemented if the Poisson point process contains infinitely many points, which occurs when $\sm(\R) = + \infty$, or {equivalently}  from Prop. \ref{prp:bnd}, when $\gamma$ is unbounded. \cite{bdn82} made a number of recommendations about the effective number of Poisson points to simulate and the way to approximate 
the remainder. In contrast to this, if the spectral measure of the variogram is integrable, then the number of points of the Poisson process is almost surely finite, which makes the algorithmic implementation possibly exact. The counterpart 
is that this Poisson number may be equal to zero with a nonzero probability. In such a case, the distribution of $\Tau(\bomega,r)$ has an atom at $0$, which is equivalent to saying that $C(0,u)$ does not tend to zero as $u$ tends to infinity.

To apply this algorithm, the spectral measure of the variogram is explicitly needed. This is a limitation because such spectral measures are not always available. Table \ref{tab:vario} provides a list of variograms and their associated spectral measures. Note however that "universal" algorithms have been developed to simulate monovariate distributions starting from their Fourier transforms \citep{dve81,bripti15}, under some conditions on the characteristic functions. These algorithms could be used to simulate directly $\Tau(\bomega,r)$ without knowing explicitly the spectral measure $\sm$. We do not pursue this route in this work since, as we will see in the next section, for many variograms, specific simulation algorithms  can be conceived. Typical examples include the variogram $\gamma (u) = \bigl( a \lvert u \rvert^\alpha + 1 
\bigr)^\beta$ \citep{Gneiting2002}.

\medskip

\begin{table}
\caption{Spectral measures corresponding to selected one-dimensional variograms. \label{tab:vario}}
\begin{center}
\begin{tabular}{ccc}
\hline\hline
$ \gamma (u) $ & $ \dpst \frac{\sm (dx)}{dx} $ & \text{Restriction} \\[0.2cm] 
\hline
$ \dpst \absu^\alpha $ & $ \dpst \frac{- 2 \, \Gamma (\alpha)}{ \Gamma ( \alpha / 2) \, \Gamma ( - \alpha /2)} \, 
\frac{1}{\absx^{1 + \alpha}} $ & $ 0 < \alpha < 2 $ \\[0.3cm]
 $ \dpst \absu -1 + e^{- \absu} $ & $ \dpst \frac{1}{\pi} \, \frac{1}{x^2 \, ( 1 + x^2)} $ &  \\[0.3cm]  
$ \dpst \begin{cases}
    \absu        & \text{if $\absu < 1$} \\
    2 \absu - 1  & \text{if $\absu \geq 1$}
   \end{cases} $
  & $ \dpst \frac{1}{\pi} \, \frac{1 + \cos x}{x^2} $ & \\[0.5cm]
 $ \dpst \begin{cases}
    u^2 ( 3 - \absu )   & \text{if $\absu < 1$} \\
    3 \absu - 1)        & \text{if $\absu \geq 1$}
   \end{cases} $
   & $ \dpst \frac{6}{\pi} \frac{ 1 - \cos x}{x^4} $  & \\[0.3cm]
 $ \dpst \ln ( 1 + u^2) $ & $ \dpst \frac{ \exp ( - \absx ) }{\absx} $ & $ $ \\[0.3cm]
$ \dpst 8 \sqrt{\pi} \, \Bigl( \sinh \frac{\argsinh u}{4} \Bigr)^2 $ & $ \dpst 
\frac{ \exp ( - \absx ) }{\absx^{3/2} } $ & $ $ \\[0.3cm]
 $ \dpst 2 \absu \arctan \absu - \ln ( 1 + u^2 ) $ &  $ \dpst \frac{ \exp ( - \absx ) }{\absx^2 } $ & $ $ \\[0.3cm]

 $ \dpst \frac{8 \sqrt{\pi}}{3} \left( 1 - (1 + u^2)^{3/4} \cos \frac{3 \, \arctan u}{2} \right) $  
 & $ \dpst \frac{ \exp ( - \absx ) }{\absx^{5/2} } $ & $ $ \\[0.3cm]
\hline\hline
\end{tabular}
\end{center}
\end{table}

\subsection{Illustrations}
Three examples are presented in $\R^2 \times \R$, all based on the same completely monotone function $\varphi (t) = 
\exp ( - r t)$ with $ r= 0.01$, but with different variograms $\gamma (u)$. All  simulations have been obtained using $p=5000$ basic cosine waves. They are displayed on a $300 \times 200$ grid with a unit square mesh size, using the same color scale 
ranging from $-4$ to $+4$. Six consecutive images, separated by short intervals of $0.2$ time units, allow to keep  track of the evolution of the large-value zones with time.

\subsubsection{First example}
The linear variogram $\gamma (u) = b \lvert u \rvert$ is certainly one of the simplest variograms that can be 
considered. Its spectral measure is proportional to the Lebesgue measure. The associated space-time covariance 
function is 
$$ C(\bh,u) = \frac{1}{1 + b \lvert u \vert} \, \exp \left( - \frac{- r \, \lvert \bh \rvert^2}{1 + b \lvert u \rvert} 
\right) , $$
which includes a spatial Gaussian covariance  and a temporal hyperbolic covariance. The conditional temporal frequencies follow a Cauchy distribution. The simulation of Fig. \ref{fig:lin} has been obtained by taking $b=1$. 

\smallskip

\begin{figure}[htp]
    \begin{center}
    \includegraphics[width=4.5cm]{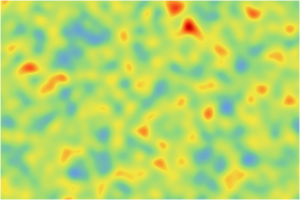} \hspace{0.1cm} \includegraphics[width=4.5cm]{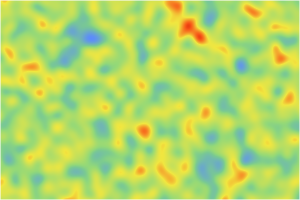} \hspace{0.1cm} \includegraphics[width=4.5cm]{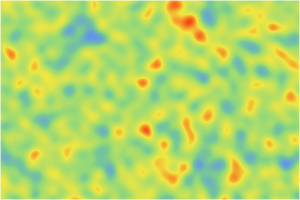}
    \end{center}
    \vspace{-0.4cm}
    \begin{center}
    \includegraphics[width=4.5cm]{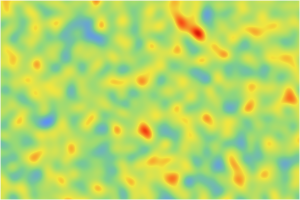} \hspace{0.1cm} \includegraphics[width=4.5cm]{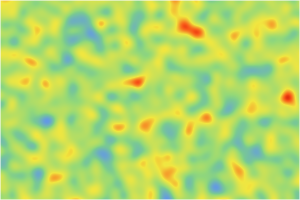} \hspace{0.1cm} \includegraphics[width=4.5cm]{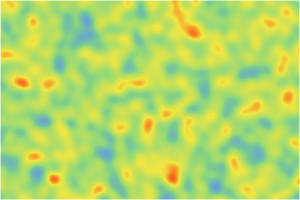}
    \end{center}
    \caption{Simulation of a Gneiting model with linear variogram.} 
\label{fig:lin}
\end{figure}

\medskip

\subsubsection{Second example}

The second example considers the logarithmic variogram $ \gamma (u) = \ln ( a^2 + u^2) / \ln a^2 - 1$. The 
corresponding space-time covariance function is equal to
$$ C ( \bh , u ) = \frac{\ln a^2}{\ln ( a^2 + u^2)} \, \exp \Bigl( - \frac{r \, \lvert \bh \rvert^2 \, \ln a^2}
{\ln (a^2 + u^2)} \Bigr). $$
Compared to the first example, the temporal covariance function
$$ C_T (u ) = \frac{\ln a^2}{\ln ( a + u^2)} $$
vanishes at infinity at very slow rate. The spectral measure of $\gamma$ can be derived from the 5th entry of Table \ref{tab:vario},
$$ \sm (d x) = \frac{\exp (- a \lvert x \rvert)}{ \lvert x \rvert \, \ln a^2}\,{dx},$$
which is not integrable in agreement with Proposition \ref{prop:bounded}. To design a simulation algorithm, the generic 
approach can be applied (see the details in Appendix A), but in the present case a more direct approach is also possible. 

Let us start with the Fourier transform \eqref{eq:fttpl} of $ \Tau (\bomega, r)$. Replacing $\gamma$ by its 
expression and putting $\lambda = \frac{\lvert \bomega \rvert^2}{4 r \ln a^2} $, one obtains
$$ \E \left[ e^{ \textstyle i \, u \, \Tau(\bomega,r) } \right] = \left( \frac{a^2}{a^2 + u^2} \right)^\lambda. $$ 
The right-hand side member of this equation can be seen as the Laplace transform at $u^2$ of a gamma distribution 
with parameter $\lambda$ and index $a^2$. Accordingly, one can write
$$ \E \left[ e^{ \textstyle i \, u \, \Tau(\bomega,r) } \right] = \frac{a^{ 2 \lambda}}{\Gamma (\lambda)} 
\int_0^{+\infty} e^{ \textstyle - x u^2} \, e^{ \textstyle - a^2 x} x^{\lambda - 1} \, dx . $$
Since $ \exp ( - x u^2 )$ is the Fourier transform of a centered Gaussian variable with variance $2 x$, it is  obtained that the distribution of $ \Tau (\bomega, r)$ is a gamma mixture mixture of Gaussian distributions. The explicit 
description is detailed in the following algorithm. Below, and in the rest of this work, ${\cal N}(\mu,\sigma^2)$ 
denotes the Gaussian (normal) distribution with expectation $\mu$ and variance $\sigma^2$, and ${\cal G} (\lambda,a^2)$ 
denotes the gamma distribution with shape parameter $\lambda$ and scale parameter $a^2$.

\smallskip

\begin{algorithm}
\caption{Sampling from the conditional spectral distribution with logarithmic variogram} 
\label{agm:log}
\begin{algorithmic}[1]
\REQUIRE $a > 0 , \lambda = \frac{\lvert \bomega \rvert^2}{4 r \ln a^2} $
\STATE Simulate $ X \sim {\cal{G}} (\lambda, a^2)$ and $ Y \sim {\cal N} (0,1)$;
\STATE Return $\Tau (\bomega, r) = Y \, \sqrt{ 2 X }$. 
\end{algorithmic}
\end{algorithm}

\smallskip

The simulation shown in Fig. \ref{fig:log} has been obtained using this algorithm with $a=2.06$, a value chosen so that 
the temporal covariance functions of the first two examples take the same value at the time lag $0.2$ between 
successive images. 

\begin{figure}[htp]
    \begin{center}
    \includegraphics[width=4.5cm]{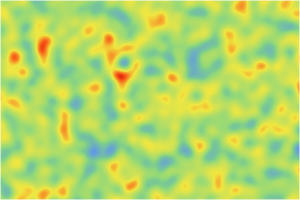} \hspace{0.1cm} \includegraphics[width=4.5cm]{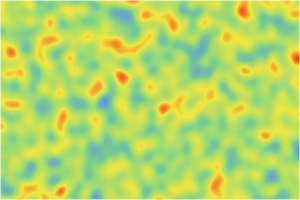} \hspace{0.1cm} \includegraphics[width=4.5cm]{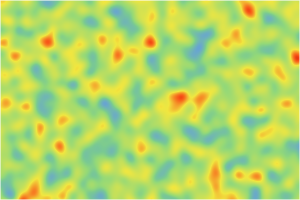}
    \end{center}
    \vspace{-0.4cm}
    \begin{center}
    \includegraphics[width=4.5cm]{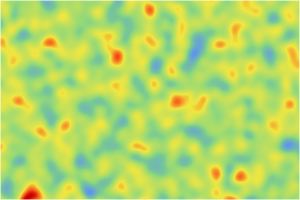} \hspace{0.1cm} \includegraphics[width=4.5cm]{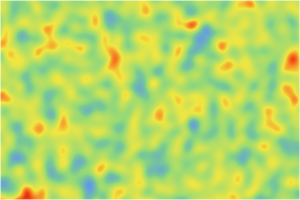} \hspace{0.1cm} \includegraphics[width=4.5cm]{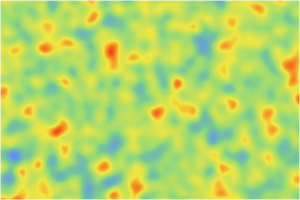}
    \end{center}
    \caption{Simulation of a Gneiting model associated with a logarithmic variogram.} 
\label{fig:log}
\end{figure}

\subsubsection{Third example}

The third example is the variogram  used in \cite{Gneiting2002} for modeling the Irish wind data set, namely$\gamma(u) = (a|u|^\alpha +1)^\beta -1$ with $a > 0$, $ 0 < \alpha \leq 2$ and $0 < \beta \leq 1$. It leads to the space-time covariance function
$$ C ( \bh , u ) = \frac{1}{(a |u|^\alpha + 1)^\beta} \, \exp \Bigl( - \frac{r \, \lvert \bh \rvert^2 }{(a 
|u|^\alpha + 1)^\beta} \Bigr). $$
In particular, the associated temporal covariance $C_T (u ) = (a |u|^\alpha + 1)^{-\beta}$ belongs to the Cauchy 
class \citep{gneiting2004stochastic} and can exhibit a wide range of behaviors. The  parameter $\alpha$ governs 
the behavior at the origin (trajectories become smoother as 
$\alpha$ increases), while the product $\alpha \beta$ controls the rate of decay at long times (trajectories have a longer memory as $\alpha \beta$ increases). 
A simulation of a Gneiting model with $a=1$, $\alpha =1$ and $\beta = 0.5$ is depicted on Fig. \ref{fig:gtg}. 

\medskip

\noindent The spectral measure of this variogram is unknown, but, here too, the conditional spectral measure can 
be simulated directly. Let us start again with the Fourier transform \eqref{eq:fttpl} of $ \Tau (\bomega, r)$. 
Replacing $\gamma$ by its expression, and putting $\lambda = \frac{\lvert \bomega \rvert^2}{4r}$, the Fourier transform becomes 
$$ \E \left[ e^{ \textstyle i \, u \, \Tau(\bomega,r) } \right] = \exp \left( - \lambda \bigl( ( a |u|^\alpha + 
1)^\beta - 1 \bigr) \right). $$ 
Up to the factor $\exp \, (\lambda)$, the right-hand side member is the Laplace transform at $ \lambda^{1 / \beta} 
\, \bigl( a |u|^\alpha + 1 \bigr) $ of a unilateral stable distribution $ {\cal S}^+ (\beta)$ with stability 
index $\beta$ \citep{zlv86}. Denoting by $f_\beta$ its probability density function, one can write
$$ \E \left[ e^{ \textstyle i \, u \, \Tau(\bomega,r) } \right] = \exp (\lambda) \int_0^{+\infty} \exp \left( - 
\lambda^{1 / \beta} \, \bigl( a |u|^\alpha + 1 \bigr) \, s \right) \, f_\beta (s) \, d s. $$ 
Now, another grouping of factors gives 
$$ \E \left[ e^{ \textstyle i \, u \, \Tau(\bomega,r) } \right] = \int_0^{+\infty} g_\beta (s) \, \exp \left( - 
\lambda^{1 / \beta} a |u|^\alpha s \right) \, d s, $$ 
where

\begin{itemize}
    \item  $g_\beta (s) = \exp \bigl(\lambda - \lambda^{1 / \beta} \, s \bigr) \, f_\beta (s) $ is the density of 
    another distribution, denoted by $ {\cal S}^+ (\beta,\lambda^{1 / \beta})$ and called exponentially tilted 
    unilateral stable distribution with stability index $\beta$ and tilting parameter $\lambda^{1 / \beta}$ 
    \citep{sto99}; 
    \item $ \exp \left( - \lambda^{1 / \beta} a |u|^\alpha s \right) $ is the Fourier transform at $ \bigl( 
    \lambda^{1/\beta} a s )^{1 / \alpha}$ of a bilateral stable distribution ${\cal S} (\alpha)$ with stability 
    index $\alpha$ \citep{lvy25,kcelvy36}. 
\end{itemize}{}

It thus appears that the distribution of $ \Tau (\bomega, r)$ is a mixture of bilateral stable distributions.  
Algorithm \ref{agm:gtg} makes this assertion more precise.

\medskip

\begin{algorithm}[h]
\caption{Sampling from the conditional spectral distribution with Cauchy temporal covariance}
\label{agm:gtg}
\begin{algorithmic}[1]
\REQUIRE $a > 0 , \, 0 < \alpha \leq 2, \, 0 < \beta \leq 1, \, \lambda = \frac{\lvert \bomega \rvert^2}{4r}$
\STATE Simulate $ S \sim {\cal S}^+ (\beta , \lambda^{1 / \beta}) $;
\STATE Simulate $ T \sim {\cal S} (\alpha)$; 
\STATE Return $\Tau ( \bomega,r) = T \, ( S a \lambda^{1/\beta} )^{1 / \alpha}$. 
\end{algorithmic}
\end{algorithm}

There remains to see how to simulate those stable distributions. $ {\cal S} (\alpha)$ can be simulated using the 
fast algorithm by \cite{cbsetal76} that generalizes the Box-Muller algorithm to simulate normal distributions and 
has become a standard. \cite{brx99} proposes an algorithm to simulate $ {\cal S}^+ (\beta , \lambda^{1 / \beta})$ 
based on a rejection from ${\cal S}^+ (\beta)$. However, this algorithm may suffer from a high rejection rate for 
large values of $ \lambda^{1 / \beta}$. This prompted \cite{dve09} to propose a double rejection technique that 
possesses a uniform and limited rejection rate. 

\begin{figure}[htp]
    \begin{center}
    \includegraphics[width=4.5cm]{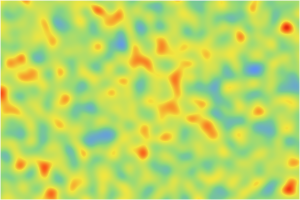} \hspace{0.1cm} \includegraphics[width=4.5cm]{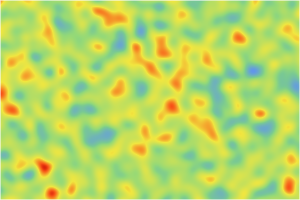} \hspace{0.1cm} \includegraphics[width=4.5cm]{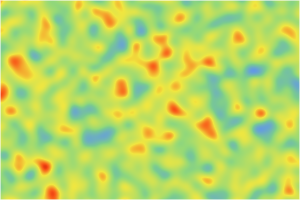}
    \end{center}
    \vspace{-0.4cm}
    \begin{center}
    \includegraphics[width=4.5cm]{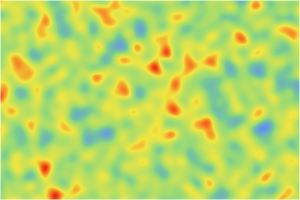} \hspace{0.1cm} \includegraphics[width=4.5cm]{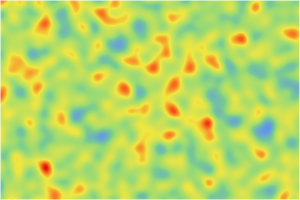} \hspace{0.1cm} \includegraphics[width=4.5cm]{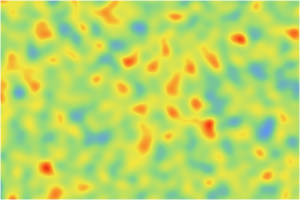}
    \end{center}
    \caption{Simulation of a Gneiting model associated with a Cauchy temporal covariance $C_T(u)$.} 
\label{fig:gtg}
\end{figure}
%

%
\section{Substitution approach}
\label{sec:substitution}
%

%
\subsection{Proposal}

Instead of simulating a temporal frequency according to the conditional measure $F^{\phantom{|}}_T ( d \tau \mid \bomega , r)$, here we simulate an intrinsic Gaussian random field $W(t)$ on $\R$ with variogram $\gamma$, without making reference to its spectral representation \eqref{eq:gmasr}. Specifically, consider the random field $Z$ defined in $\R^k \times \R$ as follows:

\begin{equation}
Z(\bx,t) = \sqrt{-2 \ln(U)} \cos \left(\sqrt{2R} \, \langle \tilde{\bOmega}, \bx \rangle + \frac{\lvert\tilde{\bOmega}\rvert}{\sqrt{2}} W(t) + \Phi \right), \qquad (\bx,\bt) \in \R^k \times \R,
\label{eq:substitution1}
\end{equation}
where:
\begin{itemize}
    \item $R$ is a nonnegative random variable with probability measure $\mu$;

    \item $\tilde{\bOmega}$ is a Gaussian random vector of $k$ independent components with zero mean and unit variance;
    
    \item $U$ is a random variable uniformly distributed in $(0,1)$, independent of $(R,\tilde{\bOmega})$; 

    \item $\Phi$ is a random variable, independent of $(R,\tilde{\bOmega},U)$, uniformly distributed in $(0, 2\pi)$;
    
    \item $W(t),\ t \in \R$, is an intrinsic random field on $\R$ with variogram $\gamma$ and Gaussian increments, independent of $(R,\tilde{\bOmega},U,\Phi)$.
\end{itemize}

\begin{theorem}
The random field $Z$ defined in \eqref{eq:substitution1} is second-order stationary in $\R^k \times \R$, with zero mean and Gneiting-type covariance function as given in \eqref{eq:G2}.
\label{theorem2}
\end{theorem}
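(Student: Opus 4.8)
The plan is to check directly from the definition \eqref{eq:substitution1} that $Z$ has finite second moments, zero mean, and the announced covariance, the key observation being that, although $W$ is only intrinsic, every moment computation below will involve $W$ solely through its increments. Since $\lvert Z(\bx,t)\rvert\le\sqrt{-2\ln U}$ and $\E[-2\ln U]=2$, the field takes values in $L^2$, so its covariance is well defined. For the mean, conditioning on $(R,\tilde{\bOmega},W,U)$ reduces the argument of the cosine to a fixed real number plus $\Phi$, and $\E[\cos(\mathrm{const}+\Phi)]=0$ because $\Phi\sim\cU(0,2\pi)$ is independent of those variables; the tower property gives $\E[Z(\bx,t)]=0$.

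Next I would compute $\E[Z(\bx,t)Z(\bx',t')]$, writing $\bh=\bx-\bx'$ and $u=t-t'$. Applying $\cos A\cos B=\tfrac12\cos(A-B)+\tfrac12\cos(A+B)$ and using that $U$ is independent of everything, the expectation factors as $\E[-2\ln U]$ times $\tfrac12\bigl(\E[\cos(A-B)]+\E[\cos(A+B)]\bigr)$. The $\cos(A+B)$ term carries the summand $2\Phi$ (together with $W(t)+W(t')$) and averages to zero once one integrates over $\Phi$; in the surviving $\cos(A-B)$ term the phase cancels, leaving
\[
A-B=\sqrt{2R}\,\langle\tilde{\bOmega},\bh\rangle+\frac{\lvert\tilde{\bOmega}\rvert}{\sqrt2}\bigl(W(t)-W(t')\bigr),
\]
which depends on $W$ only through the increment $W(t)-W(t')$. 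Since $\E[-2\ln U]=2$, this yields $\E[Z(\bx,t)Z(\bx',t')]=\E[\cos(A-B)]$.

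I would then condition on $(R,\tilde{\bOmega})$ and integrate over $W$: as $W$ has centered Gaussian increments with $\mathrm{Var}\bigl(W(t)-W(t')\bigr)=2\gamma(u)$, the increment is symmetric Gaussian, so the sine part drops out and
\[
\E_W\!\left[\cos\!\Bigl(\tfrac{\lvert\tilde{\bOmega}\rvert}{\sqrt2}\bigl(W(t)-W(t')\bigr)\Bigr)\right]=\exp\!\Bigl(-\tfrac{\lvert\tilde{\bOmega}\rvert^2}{2}\,\gamma(u)\Bigr).
\]
Inserting the $\cN(\bzero,\bI_k)$ density of $\tilde{\bOmega}$ and the law $\mu$ of $R$, and combining the two Gaussian factors $\exp(-\lvert\tilde{\bomega}\rvert^2/2)$ and $\exp(-\lvert\tilde{\bomega}\rvert^2\gamma(u)/2)$ into $\exp(-\lvert\tilde{\bomega}\rvert^2(\gamma(u)+1)/2)$, one recovers exactly the integral representation \eqref{eq:emery111} of $C(\bh,u)$ provided by the Lemma; the interchanges of integration are legitimate by Fubini, the integrand being dominated by an integrable function (bounded cosine, $\mu$ a probability measure, Gaussian density). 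Because the resulting expression depends on $(\bx,t)$ and $(\bx',t')$ only through $(\bh,u)$ — here using the stationarity of the increments of $W$ — the field $Z$ is second-order stationary with covariance \eqref{eq:G2}.

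I do not expect a deep obstacle here; the only point requiring care is that $W$ is merely intrinsic, so one must never manipulate $W(t)$ in isolation — all computations must pass through $W(t)-W(t')$, which is exactly why the uniform phase $\Phi$ is built into \eqref{eq:substitution1}: it annihilates the $\cos(A+B)$ term, the sole term involving $W(t)+W(t')$. A secondary bookkeeping matter is keeping track of the factor $2$ relating the variogram $\gamma$ to the variance of the Gaussian increment, which is what makes the two Gaussian exponentials combine into the one appearing in \eqref{eq:emery111}.
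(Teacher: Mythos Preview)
Your proposal is correct and follows essentially the same route as the paper's proof: condition on $\Phi$ to get the zero mean, use the product-to-sum identity together with the uniform phase to kill the $\cos(A+B)$ term, then exploit the Gaussianity of the increment $W(t)-W(t')$ with variance $2\gamma(u)$ to obtain the factor $\exp(-\lvert\tilde{\bOmega}\rvert^2\gamma(u)/2)$, and finally combine with the $\cN(\bzero,\bI_k)$ density of $\tilde{\bOmega}$ to land on \eqref{eq:emery111}. The only cosmetic difference is that the paper eliminates the cross sine term via a parity-in-$\tilde{\bomega}$ argument after writing the increment as $\sqrt{2\gamma(u)}\,Y$, whereas you do it (equivalently) by symmetry of the Gaussian increment; your added $L^2$ bound and Fubini remark are welcome housekeeping the paper leaves implicit.
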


The proof of this theorem is given in Appendix B.  The random field $Z$ defined in  \eqref{eq:substitution1} is a particular case of a substitution random field, obtained by combining a directing function $D$ with stationary increments in $\R^k \times \R$ and a stationary coding process $X$ in $\R$ \citep{Lantu1991, Lantu2002}, a construction that generalizes the subordination approach introduced by \cite{Feller1966}. Here, the directing function is the sum of a space-dependent linear drift, which has stationary increments in $\R^k$, and a time-dependent random field with stationary increments in $\R$:
\begin{equation*}
    D(\bx,t) = \sqrt{2R} \, \langle \tilde{\bOmega}, \bx \rangle + \frac{\lvert\tilde{\bOmega}\rvert}{\sqrt{2}} W(t), \qquad (\bx,t) \in \R^k \times \R. 
\end{equation*}
As for the coding process, it is a cosine function in $\R$ with constant frequency $(2\pi)^{-1}$, random phase $\Phi$ uniformly distributed in $(0, 2 \pi)$ and random amplitude $\sqrt{-2 \ln(U)}$:
\begin{equation*}
    X(d) = \sqrt{-2 \ln(U)} \cos \left( d + \Phi \right), \qquad d \in \R. 
\end{equation*}
Such a coding process is stationary and has a Gaussian marginal distribution. Because $D$ and $X$ are independent, the substitution random field $Z = X \circ D$ defined in \eqref{eq:substitution1} inherits several properties of the coding process \citep{Lantu2002}, in particular it is stationary and has the same Gaussian marginal distribution as $X$. The latter property (Gaussian marginal) can also be proven on the basis of the Box-Muller transformation \citep{Box1958}.

There is actually more, as the coding process is a Gaussian random field in $\R$. This can be proven by observing that any weighted sum of variables $X(d_1), \cdots, X(d_j)$ has a Gaussian distribution, since it can be written under the form $a \sqrt{-2 \ln(U)} \cos \left( b + \Phi \right)$, with deterministic terms $a$ and $b$ that depend on the chosen weights and time instants $t_1, \cdots, t_j$. In particular, the bivariate distributions of $X$ are bi-Gaussian and have an isofactorial representation with Hermite polynomials as the factors \citep{Lancaster1957, Chiles2012}. The substitution construction therefore ensures that $Z$ also has bivariate distributions with Hermite polynomials as the factors \citep{Matheron1989, Lantu2002}, which are nothing else than mixtures of bi-Gaussian distributions \citep{Matheron1976, Chiles2012}. Note the similarity of this construction with the substitution models proposed by \cite{Matheron1982} and \cite{Emery2008}.

\subsection{Simulation algorithm}

The random field $Z$ defined in \eqref{eq:substitution1} is centered and second-order stationary. Its covariance function belongs to the Gneiting class \eqref{eq:G2} and its marginal distribution is Gaussian at any point $(\bx,t) \in \R^k \times \R$. 
To obtain a random field whose finite-dimensional distributions are approximately Gaussian,
one can simulate a large number of independent random fields $Z_j, j = 1, \dots, p$, as in \eqref{eq:substitution1}, and set:
\begin{equation}
\tilde{Z}(\bx,t) = \sum_{j=1}^p \sqrt{\frac{-2 \ln(U_j)}{p}} \cos \left(\sqrt{2 R_j} \langle \tilde{\bOmega}_j, \bx \rangle + \frac{\lvert\tilde{\bOmega}_j\rvert}{\sqrt{2}} W_j(t) + \Phi_j \right), \qquad (\bx,t) \in \R^k \times \R,
\label{eq:substitution3}
\end{equation}
where $\{ (R_j,\tilde{\bOmega}_j,U_j,\Phi_j,W_j): j = 1, \dots, p \}$ are independent copies of $(R,\tilde{\bOmega},U,\Phi,W)$. 

\medskip

Algorithm \ref{algo:substitution} described below can be used for simulating $\tilde{Z}$. The simulation of the intrinsic random field $ W_j$ at line 6 can be done by the covariance matrix decomposition method applied to the increment $W_j(t)-W_j(0)$, by fixing $W_j(0) = 0$ and using the nonstationary covariance $\gamma(t) + \gamma(t') - \gamma(t-t')$ \citep{Davis1987}. This is possible as long as the number of time instants considered for the simulation is not too large (less than a few tens of thousands). For larger numbers, other simulation methods are applicable, such as circulant embedding matrices with FFT \citep{wood1994simulation} or the Gibbs propagation algorithm \citep{Arroyo2015}. 

\begin{algorithm}
\caption{Substitution algorithm}
\label{algo:substitution}
\begin{algorithmic}[1]
\REQUIRE $\mu$ and $\gamma(u)$ 
\REQUIRE $p$
\FOR{$j=1$ to $p$}
\STATE Simulate $R_j \sim \mu$;
\STATE Simulate $\tilde{\bOmega}_j \sim {\cal N}_k(\bzero,\bI_k)$;
\STATE Simulate  $\Phi_j \sim {\cal U}(0, 2\pi)$;
\STATE Simulate  $U_j \sim {\cal U}(0, 1)$;
\STATE Simulate an independent intrinsic random field $W_j$ with Gaussian increments and  variogram $\gamma(u)$.
\ENDFOR
\STATE Compute the simulated random field at any target location $(\bx,t) \in \R^k \times \R$ as per \eqref{eq:substitution3}.
\end{algorithmic}
\end{algorithm}

As an illustration, consider the following covariance functions over $\R^2 \times \R$:
\medskip
$$ C_1 ( \bh , u ) = \frac{1}{\sqrt{1 + |u|}} \, \exp \Bigl( - \frac{0.01 \, \lvert \bh \rvert^2 }{\sqrt{1 + |u|}} \Bigr), $$
$$ C_2 ( \bh , u ) = \frac{1}{\sqrt{1 + |u|}} \, \exp \Bigl( - \frac{0.01 \, \lvert \bh \rvert }{\sqrt{1 + |u|}} \Bigr), $$
which belong to the Gneiting class \eqref{eq:G2} with  $\varphi_1 (t) = e^{- 0.01 t}$, $\varphi_2 (t) = e^{- 0.01 \sqrt{t}}$ and $\gamma(u) = (1 + |u|)^{0.5}-1$. Realizations of Gaussian random fields possessing these covariances are displayed in Figs. \ref{fig:pow} and \ref{fig:pow2} on a $300 \times 200$ spatial domain with unit square mesh size, for six consecutive time instants separated $0.2$ time unit. The simulation has been 
obtained with Algorithm \ref{algo:substitution},
by using $p=5000$ and the covariance matrix decomposition approach to simulate the random fields $W_j$ at the six time instants of interest. Concerning the second example, the probability measure associated with $\varphi_2$ is that of a random variable $R$ obtained by square rooting a gamma random variable with shape parameter $0.5$ \citep{emery2008tb}. 

\smallskip

\begin{figure}[htp]
    \begin{center}
    \includegraphics[width=4.5cm]{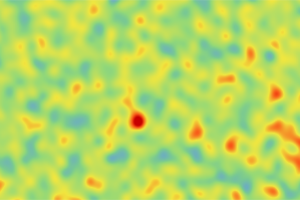} \hspace{0.1cm} \includegraphics[width=4.5cm]{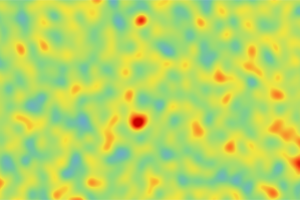} \hspace{0.1cm} \includegraphics[width=4.5cm]{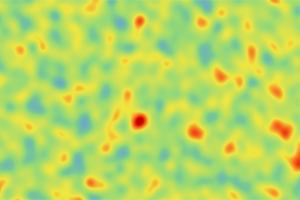}
    \end{center}
    \vspace{-0.4cm}
    \begin{center}
    \includegraphics[width=4.5cm]{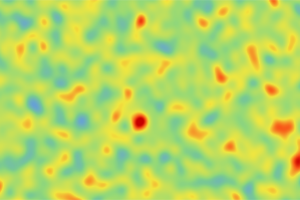} \hspace{0.1cm} \includegraphics[width=4.5cm]{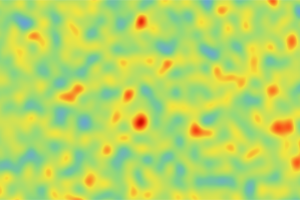} \hspace{0.1cm} \includegraphics[width=4.5cm]{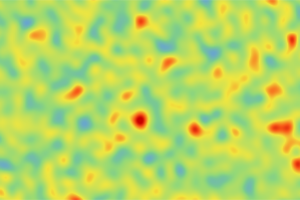}
    \end{center}
    \caption{Simulation of a Gneiting model associated with a Gaussian spatial covariance and a power variogram $\gamma(u)$.} 
    \label{fig:pow}
\end{figure}

\medskip

\begin{figure}[htp]
    \begin{center}
    \includegraphics[width=4.5cm]{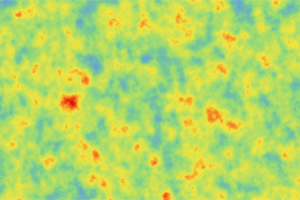} \hspace{0.1cm} \includegraphics[width=4.5cm]{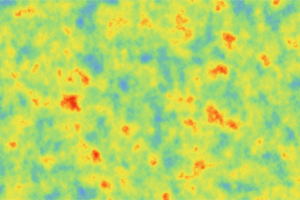} \hspace{0.1cm} \includegraphics[width=4.5cm]{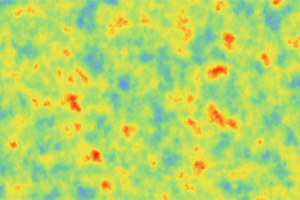}
    \end{center}
    \vspace{-0.4cm}
    \begin{center}
    \includegraphics[width=4.5cm]{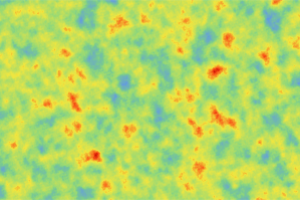} \hspace{0.1cm} \includegraphics[width=4.5cm]{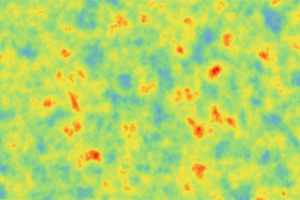} \hspace{0.1cm} \includegraphics[width=4.5cm]{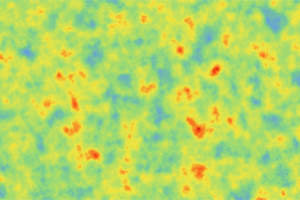}
    \end{center}
    \caption{Simulation of a Gneiting model associated with an exponential spatial covariance and a power variogram 
    $\gamma(u)$.} 
    \label{fig:pow2}
\end{figure}

\medskip

\section{Discussion}
\label{sec:discussion}

\subsection{Experimental reproduction of the spatio-temporal structure}
\label{sec:fluctuations}
The reproduction of the covariance structure can be experimentally validated by calculating the sample variograms of a set of realizations and comparing them with the theoretical variogram $\gamma(\bh,u)=1-C(\bh,u)$. An example is shown on Fig. \ref{fig:fluctuations}, where fifty realizations have been generated on a field in $\R^2 \times \R$ with $100 \times 100 \times 100$ nodes with a spatial mesh of $1 \times 1$ and a temporal mesh of $0.2$, with the same covariance model as in Fig \ref{fig:gtg} and $p=5000$ in both simulation approaches. Three spatial and three temporal variograms have been calculated, for time lags $u=0$, $u=0.2$ and $u=1.6$ and space lags $\bh=(0,0)$, $\bh=(6,6)$ and $\bh=(10,10)$, respectively. In all  cases, the experimental variograms fluctuate without any bias around the expected model, their average over the realizations matching almost perfectly the theoretical variogram, which corroborates the correctness of the proposed algorithms. Not surprisingly, the spectral approach provides experimental variograms that exhibit, for the same number of basic random fields ($p=5000$) in the sums \eqref{eq:spectral00} and \eqref{eq:substitution3}, slightly higher fluctuations than the substitution approach \citep{lantu1994}. Also note the dimple (hole effect) of the temporal variogram associated with the spatial lag $\bh=(10,10)$, a well-known property of the Gneiting model that arises even when the function $\psi$ is monotonic \citep{Kent,cuevas2017}.

\medskip

\begin{figure}[htp]
    \begin{center}
    \includegraphics[width=7.5cm]{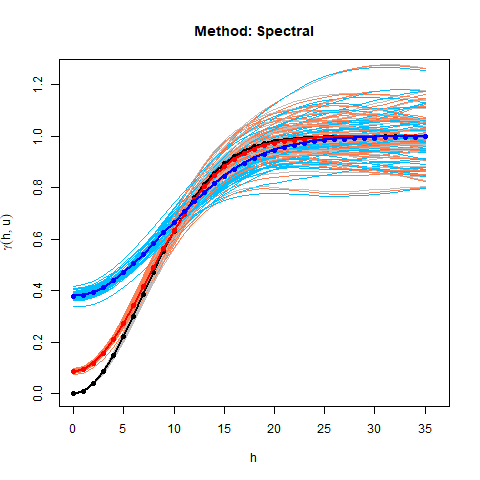} \hspace{0.1cm} \includegraphics[width=7.5cm]{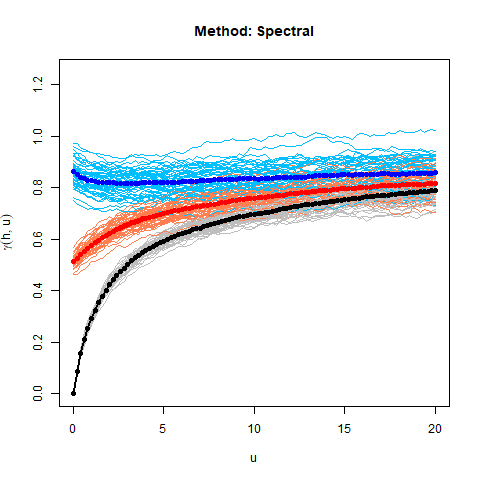} \hspace{0.1cm}     \end{center}
    \vspace{-0.4cm}
    \begin{center}
    \includegraphics[width=7.5cm]{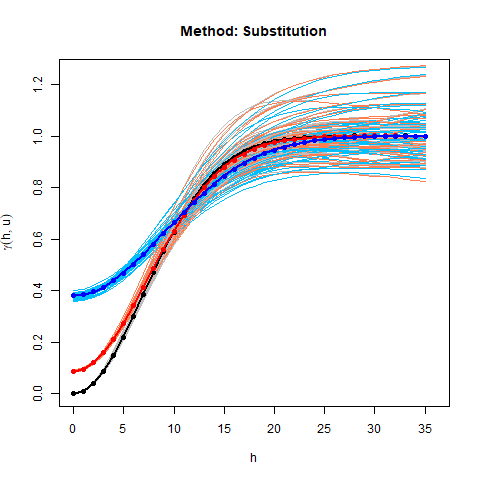} \hspace{0.1cm} \includegraphics[width=7.5cm]{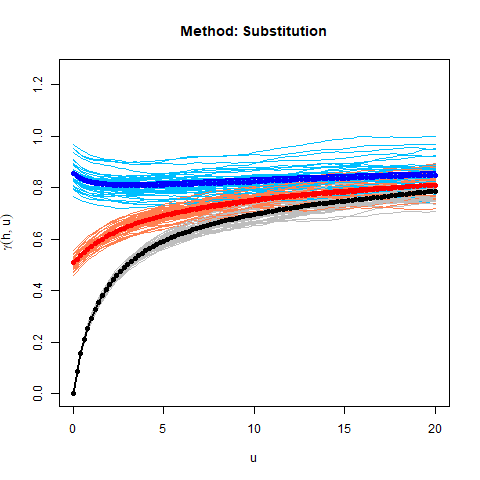} \hspace{0.1cm}     \end{center}
    \caption{Experimental spatial and temporal variograms (solid thin lines) for fifty realizations of the model in Fig \ref{fig:gtg} obtained with the spectral approach (top) and the substitution approach (bottom) on a $100 \times 100 \times 100$ domain of $\R^2 \times \R$ with spatial mesh $1 \times 1$ and temporal mesh $0.2$. Three spatial variograms are drawn in the left column, associated with $u=0$ (black), $u=0.2$ (red) and $u=1.6$ (blue). Three temporal variograms are drawn in the right column, associated with $\bh=(0,0)$ (black), $\bh=(6,6)$ (red) and $\bh=(10,10)$ (blue). In each case, the mean of the experimental variogram (dots) and the theoretical variograms (solid thick lines) are superimposed.} 
    \label{fig:fluctuations}
\end{figure}

\medskip

\subsection{Comparison of the simulation approaches}
\label{sec:approaches}
We presented two approaches for simulating spatio-temporal random fields with a Gneiting-type covariance function, based on two ingredients: a scale mixture argument for the spatial structure and the use of variograms for the temporal structure.  Despite algorithmic differences, these two approaches are mathematically very close in the sense that, conditional on the spatial scale $r$, they rely on the decomposition of the spectral measure
$$ F (d\bomega,d\tau \mid r ) = F^{\phantom{|}}_S (d\bomega \mid r) \, F^{\phantom{|}}_T (d\tau \mid \bomega , r).$$
Sampling from the marginal spatial spectral measure is common to both approaches, as the random vector $\sqrt{2 R} \, \tilde{\bOmega}$ used in the substitution approach \eqref{eq:substitution1} has the same distribution ${\cal N}_k(\bzero,2 R \, \bI_k)$ as the random vector $\bOmega(R)$ used in the spectral approach  \eqref{eq:ftspl}. The two methods only differ in the way of handling the temporal dimension.

The spectral approach uses the spectral measure associated with the temporal variogram $\gamma(u)$  in order 
to sample a conditional temporal frequency. The spectral measure is known for several classes of variogram functions, see  
Table \ref{tab:vario}. However, it is not known for some popular temporal structures such as
$\gamma(u) = (a|u|^{2\alpha} + 1)^\beta-1$. This function was used in 
\cite{Gneiting2002} and in many other studies implying climate variables, see 
e.g. \cite{bourotte2016flexible}. In this case however, a specific algorithm was designed, see example 3 above in Section \ref{sec:spectral}. A strong advantage of the spectral approach is that it is continuous in $\R^k \times 
\R$ and requires limited storage space since it only uses $p$ independent copies of random vectors of length $k+4$. 
The random field can then be computed at any location $(\bx,t)$ using~\eqref{eq:spectral00}. 

The spectral method can be extended to the simulation in $\R^k \times \R^l$, as it is relies on Theorem \ref{thm:cond} 
that can easily be generalized to that space.  In this setting,  the simulation of the spatial density remains unchanged; conditional upon $(\bOmega,R)$, the vector $\bTau=(\Tau_1,...,\Tau_l)$ is still an infinitely divisible 
random vector. The only difficulty is the simulation of multivariate infinitely 
divisible distributions. But, as it was the case for $l=1$, a case by case approach can be considered.

\smallskip

In the substitution approach, one simulates a pure spatial drift and a temporal intrinsic random field $W(t)$ with Gaussian increments and project them onto $(\bx,t)$ using a cosine function, as in \eqref{eq:substitution1}.
One advantage of using a standardized Gaussian random vector $\tilde{\bOmega}$ instead of a vector $\bOmega(R)$ with components of variance $2R$ is the possibility to apply the algorithm even if $R$ is zero, which happens with a non-zero probability if the measure $\mu$ defined in \eqref{eq:Laplace} has an atom at $0$. Although this case has been excluded in the presentation of both simulation approaches, it would not imply any change in the proposal in Section \ref{sec:substitution} and the demonstration in Appendix B. Another advantage of the substitution approach is that it is very generic, in the sense that the spectral measure of $\gamma(u)$ does not need to be known. Several discrete simulation algorithms such as the covariance matrix decomposition, the discrete spectral or the Gibbs propagation are possible in order to simulate $W(t)$, as already pointed out in Section \ref{sec:spectral}. These algorithms are applicable not only in $\R$, but also in $\R^l$ with $l>1$, which makes straightforward the extension of the presented approach to $\R^k \times \R^l$. The only difference lies in that the intrinsic random field $W$ is now defined on $\R^l$, and so is its variogram $\gamma(\bu)$.
Simulating a random field $W$ on $\R^l$ is, however, more difficult than on $\R$, essentially because the number of points targeted for simulation usually increases with $l$.
Note also that, in this approach, $\gamma$ can be any variogram on $\R^l$, which proves that this is a sufficient condition for the Gneiting covariance in \eqref{eq:G2} to be a valid model. The assumption of continuity of $\gamma$ is not even needed: variograms with a nugget effect, which do not have a spectral representation, can be considered in the construction by substitution.

The disadvantage is that the random field will be simulated only at those temporal coordinates where $W(t)$ has been simulated. If $N_T$ denotes the number of such temporal coordinates, then the simulation requires $N_T + p(k+3)$ values to be stored. Although this is uncommon in practice, a huge number of temporal coordinates (say, $N_T > 10^6$) may make the computational requirements prohibitive for the aforementioned discrete algorithms.

\section{Conclusions and Perspectives}
\label{sec:conclusions}

Two algorithms have been presented to simulate space-time random fields with nonseparable covariance belonging to the Gneiting class. The first one relies on a spectral decomposition of the covariance and constructs the simulated random field as a weighted sum of cosine waves with random frequencies and phases. In the second algorithm, an intrinsic time-dependent random field is substituted for the temporal frequency, yielding another representation of the simulated random field as a mixture of cosine waves. The proposed algorithms have been tested and validated through synthetic case studies. Their computational requirements are affordable in terms of both memory storage and CPU time; in particular, the number of needed floating point operations is proportional to the number of target space-time locations. Also, the algorithms can be adapted to the simulation of random fields in $\R^k \times \R^l$ with $l>1$.

This work paves the road to many possible extensions. Rather straightforward extensions include the simulation of multivariate space-time random fields based on the Gneiting class such as those proposed in \cite{bourotte2016flexible} and the simulation of random fields on spheres cross time with covariance models similar to the Gneiting class, but involving Stieltjes functions instead of Bernstein functions, see \cite{whitePorcu2019completepicture}. Other nonseparable models could also be simulated using at least one of the approaches presented here, including models proposed in \cite{ma2003families}. Spatio-temporal random fields derived from SPDEs are characterized through their spectral measures, see  \cite{carrizo018general} for a general presentation of these models. Simulating such fields could in some cases be performed using our spectral approach under the condition that one is able to simulate from the spectral measure, which requires further scrutiny. An interesting feature of these models is that, contrarily to the Gneiting class, they are not necessarily fully symmetric and that the marginal covariance function $C_S$ can include non-monotonic behavior.

\section*{Acknowledgements}

Denis Allard and Christian Lantuéjoul acknowledge support of the RESSTE network funded by the Applied Mathematics and Informatics division of INRA. Xavier Emery acknowledges the support of grant CONICYT PIA AFB180004 (AMTC) from the Chilean Commission for Scientific and Technological Research.

%
\appendix
\section{Proofs for the spectral approach}

\subsection{Proof of Theorem \ref{thm=dens-spec}}

Substituting \eqref{eq:gmasr} in \eqref{eq:fttpl}, one obtains
\begin{eqnarray}
\E \left[ e^{ \textstyle i \, u \, \Tau(\bomega,r) } \right] & =  & \exp \bigl( - \lambda(\bomega,r) \gamma(u) \bigr) \nonumber\\
& = & \exp \left( - \lambda(\bomega,r)
\int_{\R}\left(1 - \cos( u x) \right)  \sm (dx)\right) \nonumber \\
& = & 
\exp \left(\lambda(\bomega,r)
\int_{\R}\left(e^{i u x}-1-\frac{i u x}{1+x^2}\right)  \sm(dx)\right), \quad u \in \R 
\label{eq:loiid}
\end{eqnarray}
where $\lambda(\bomega,r)=\frac{\lvert \bomega \rvert^2}{4r}$ and where the last equality stems from the symmetry of $\sm$ and the integrability condition \eqref{eq:bndsm}. 
Therefore,  the distribution $F^{\phantom{|}}_T ( d \tau \mid \bomega, r)$ is an infinitely divisible distribution with Lévy measure ${\nu}_{\bomega,r}=\lambda(\bomega,r)\sm$. 

\medskip

Assume first that $\sm(\R)=+\infty$.  In this case,  ${\nu}_{\bomega,r}(\R)=+\infty$ for $\bomega\ne 0$ and the Lévy measure ${\nu}_{\bomega,r}$ is absolutely continuous, since $\sm$ is absolutely continuous. Then, by applying Lemma 1 of \cite{sato82} the infinitely divisible distribution $F^{\phantom{|}}_T ( d \tau \mid \bomega, r)$ is absolutely continuous for $\bomega\ne 0$ and for a.e. $\bomega$. As a consequence, since $F^{\phantom{|}}_S (  d\bomega\mid r)$ is also absolutely continuous, it follows that
$$
F(d\bomega,d\tau)=\int_0^{+\infty}  F^{\phantom{|}}_S (  d\bomega\mid r)F^{\phantom{|}}_T ( d \tau \mid \bomega, r)\mu (dr)
$$
is absolutely continuous.

\medskip

Let us now assume $\vartheta =\sm(\R)<+\infty$. In this case, \eqref{eq:loiid} can be rewritten as 
$$
\E \left[ e^{ \textstyle i \, u \, \Tau(\bomega,r) } \right] = \exp \left( \lambda(\bomega,r)
\int_{\R}\left(e^{i u x}-1\right)  \sm(dx)\right), \quad u \in \R 
$$
by  symmetry of the finite measure $\sm$. Since $0 < \vartheta < +\infty$,  $\Tau(\bomega,r)$ has a compound Poisson distribution with
$\mathbb{P}\bigl(\Tau(\bomega,r)=0)\bigr)=\exp\bigl(-\vartheta\lambda(\bomega,r)\bigr)$, 
which implies that 
$$
\mathbb{P}\left(\Tau=0\right)=\E\left[\exp\left(-\frac{\vartheta\vert \bOmega\vert^2 }{4R}\right)\right] >0.
$$
In conclusion, when $\vartheta < \infty$, the distribution of $\Tau$ has an atom in 0. Hence, $F$ is not absolutely continuous in this case.

\subsection{Proof of Theorem \ref{thm:cond}}
Recall that 
\begin{equation}\label{eq:cst}
C (\bh, u \mid r ) = \frac{1}{\bigl(\gamma(u)+1 \bigr)^{k/2}}\, \exp \left( - \frac{r \lvert \bh \rvert^2} {\gamma(u)+1}  \right)  
\equiv \int_{\R^k} \int_\R  e^{ \textstyle i \langle h , \bomega \rangle + i u \tau } \,  F^{\phantom{|}}_T ( d \tau 
\mid \bomega , r) \, F^{\phantom{|}}_S ( d \bomega \mid r) .
\end{equation}
If $u=0$, then 
$$ C (\bh, 0 \mid r ) = \exp \left( - r \lvert \bh \rvert^2  \right) \equiv \int_{\R^k}  e^{ \textstyle i \langle h , 
\bomega \rangle} \, F^{\phantom{|}}_S ( d \bomega \mid r), $$
which is nothing but \eqref{eq:ftspl}. This shows that $ F^{\phantom{|}}_S$ possesses the density 
\begin{equation}\label{eq:spldsy}
f^{\phantom{|}}_S(\bomega \mid r ) = \frac{1}{(4\pi r)^{k/2}} \exp \left( - \frac{ \lvert \bomega \rvert^2}{4 r} \right).
\end{equation}
Plugging \eqref{eq:spldsy} into \eqref{eq:cst}, one obtains
\begin{equation}\label{eq:cmp1}
C (\bh, u \mid r ) = \frac{1}{(4\pi r)^{k/2}} \int_{\R^k} e^{ \textstyle i \langle h , \bomega \rangle} \, \exp 
\left( - \frac{ \lvert \bomega \rvert^2}{4 r} \right) \, \int_\R  e^{ \textstyle i u \tau } \,  F^{\phantom{|}}_T 
( d \tau \mid \bomega , r) \, d \bomega .
\end{equation}
On the other hand, up to a multiplicative factor, the function $ \bomega \, \mapsto \exp\left(-\, r \lvert \bh \rvert^2 / 
(\gamma(u)+1) \right)$ is the Fourier transform of a Gaussian random vector:
\begin{equation}\label{eq:cmp2}
C (\bh, u \mid r ) = \frac{1}{(4\pi r)^{k/2}} \int_{\R^k} e^{ \textstyle i \langle h , \bomega \rangle} \, \exp 
\left( - \frac{ \lvert \bomega \rvert^2 \, (\gamma(u)+1)}{4 r} \right) \,  d \bomega.
\end{equation}
Comparing \eqref{eq:cmp1} and \eqref{eq:cmp2}, the injectivity of the Fourier transform implies 
$$ \exp \left( - \frac{ \lvert \bomega \rvert^2}{4 r} \right) \, \int_\R  e^{ \textstyle i u \tau } \,  
F^{\phantom{|}}_T ( d \tau \mid \bomega, r) = \exp \left( - \frac{ \lvert \bomega \rvert^2 \, (\gamma(u)+1) }{4 r} 
\right) \qquad \bomega\text{-a.e.} $$
or equivalently 
$$ \int_\R  e^{ \textstyle i u \tau } \,  F^{\phantom{|}}_T ( d \tau \mid \bomega, r) = \exp \left( - \frac{ \lvert 
\bomega \rvert^2 \gamma(u)}{4 r} \right) \qquad \bomega\text{-a.e.},$$
which is precisely \eqref{eq:fttpl}. \hfill $\Box$

\subsection{On the generic approach}
The aim of this section is to show that $\sum_{n \geq 1} X_{\Tau_n}$ and $\Tau (\bomega,r)$ have the same distribution. 
This is done by comparing their Fourier transforms. Remind that the spectral measure of the variogram is positive, symmetric, without an atom at the origin, and satisfies the integrability property 
\begin{equation}\label{eq:bsm}
\int_{\R} \frac{x^2 \, \sm (dx)}{1 + x^2} = A < +\infty. 
\end{equation}
Let us start with
$$ \sm (dx) = \int_{\R_+} \exp \left( - t \frac{x^2}{1 + x^2} \right) \, \frac{x^2 \, \sm (dx)}{1 + x^2} \, dt. $$
Because of \eqref{eq:bsm}, the positive function $\theta$ defined on $\R_+$ by 
$$ \theta (t) = \int_{\R} \exp \left( - t \frac{x^2}{1 + x^2} \right) \, \frac{x^2 \, \sm (dx)}{1 + x^2} $$
is upper bounded by $A$. It follows that, for each $t>0$, the measure 
$$ \sm_t (dx) = \frac{1}{\theta(t)} \exp \left( - t \frac{x^2}{1 + x^2} \right) \, \frac{x^2 \, \sm (dx)}{1 + x^2} $$
is a probability measure on $\R$. This measure is symmetric, and satisfies
\begin{equation}\label{eq:bdn}
\sm (dx) = \int_{\R_+} \sm_t (dx) \, \theta (t) \, dt. 
\end{equation}

\medskip

Consider now a Poisson point process $\bigl(\Tau_n, n \geq 1 \bigr)$ with intensity $\lambda (t) = \lambda \, \theta(t)$ 
on $\R_+$ ($\lambda$ is put here as a short notation for $\frac{\lvert \bomega \rvert^2}{4r}$). Since $\theta$ $\lambda(t)$ is upper 
bounded by $\lambda A$, this process has no accumulation point. Consider also a family $\bigl( X_t, t \in \R_+ \bigr)$ 
of independent random variables distributed as $\sm_t$. Because $\sm_t$ is symmetric, 
the Fourier transform of $U_t$ can be written as
\begin{equation}\label{eq:frr}
\E \bigl[ \exp ( i u X_t ) \bigr] = \int_{\R} \cos (u x) \, \sm_t (dx). 
\end{equation}
In what follows, we calculate the Fourier transform of $\Tau = \sum_{n \geq 1} X_{\Tau_n}$. Denoting by 
$ \Lambda (t_0)$ the integral of $\lambda (t)$ on $]0,t_0[$, we have
\begin{align*}
\E \bigl[ \exp ( i u \Tau ) \bigr] 
&= \lim_{t_0 \longrightarrow \infty} \sum_{n=0}^\infty \exp \bigl( - \Lambda (t_0) \bigr) \, \frac{\Lambda^n 
(t_0)}{n!} \, \left[ \int_0^{t_0} \frac{\lambda(t)}{\Lambda (t_0)} \, \E \bigl[ \exp ( i u X_t ) \bigr] \, dt 
\right]^n \\
&= \lim_{t_0 \longrightarrow \infty} \exp \left( \int_0^{t_0} \E \bigl[ \exp ( i u X_t ) - 1 \bigr] \, \lambda(t) 
\, dt \right) \\
&= \exp \left( \int_0^{\infty} \E \bigl[ \exp ( i u X_t ) - 1 \bigr] \, \lambda(t) \, dt \right).  
\end{align*}
This implies, owing to \eqref{eq:frr}
$$ \E \bigl[ \exp ( i u \Tau ) \bigr] = \exp \left( \int_0^{\infty} \int_{\R} \bigl[ \cos (u x) - 1 \bigr] \, \sm_t (dx) \, 
\lambda(t) \, dt \right). $$
Permuting the integrals and replacing $\lambda (t)$ by its expression, we obtain 
$$ \E \bigl[ \exp ( i u \Tau ) \bigr] = \exp \left( \frac{ \vert \bomega \rvert^2}{4 r} \, \int_{\R} \bigl[ \cos (u x) 
- 1 \bigr] \, \sm (dx) \right). $$
Finally, the spectral representation \eqref{eq:gmasr} of $\gamma$ gives
$$ \E \bigl[ \exp ( i u \Tau ) \bigr] = \exp \left( - \frac{ \vert \bomega \rvert^2}{4 r} \gamma (u) \right) , $$
which is precisely the Fourier transform \eqref{eq:fttpl} of $\Tau (\bomega,r)$.  \hfill $\Box$

\bigskip

\subsection{Implementing the generic approach for logarithmic variograms}
The construction of $\theta$ and $\sm_t$ proposed in appendix A.3 is not necessarily unique. Starting from 
$$ \sm (d x) = \frac{\exp (- a \lvert x \rvert)}{ \lvert x \rvert \, \ln a^2} = \frac{1}{\ln a^2} \int_a^{+\infty} 
\exp \bigl(- t \, \lvert x \rvert \bigr) \, d t , $$
it appears that a possible decomposition such as \eqref{eq:dcn} can be obtained by taking 
$$ \theta (t) = \frac{2}{t \, \ln a^2} \, 1^{\phantom{|}}_{t \geq a} = \frac{1}{t \, \ln a} \, 
1^{\phantom{|}}_{t \geq a} $$
and 
$$ \sm_t (dx) = \frac{1}{2}  t \, e^{ - t \lvert x \vert} \, dx, \qquad t > a .$$

Consider now a Poisson point process $ \bigl( \Tau_n , \, n \geq 1 \bigr) $ on $\R_+$ with intensity function
$$ \lambda (t) = \frac{ \lvert \bomega \rvert^2}{4 \, r } \, \theta (t) \equiv \frac{\lambda}{t} \, 
1^{\phantom{|}}_{t \geq a}. $$
A simple approach to simulate this point process is to take the inverse of a homogeneous point process through the primitive $ \Lambda (t) = \lambda \, \ln ( t / a) $ of the intensity function that vanishes at $a$, 
as shown in Fig. \ref{fig:ans}. In this figure, the $U_i$'s are independent standard uniform variables and are 
related to the Poisson times by the formula $ \lambda \, \ln ( \Tau_n / a ) = - \ln \bigl( U_1 \cdots U_n \bigr)$, 
which gives
\begin{equation}\label{eq:tn}
\Tau_n = \frac{a}{(U_1 \cdots U_n )^{1 / \lambda}} . 
\end{equation}

\medskip

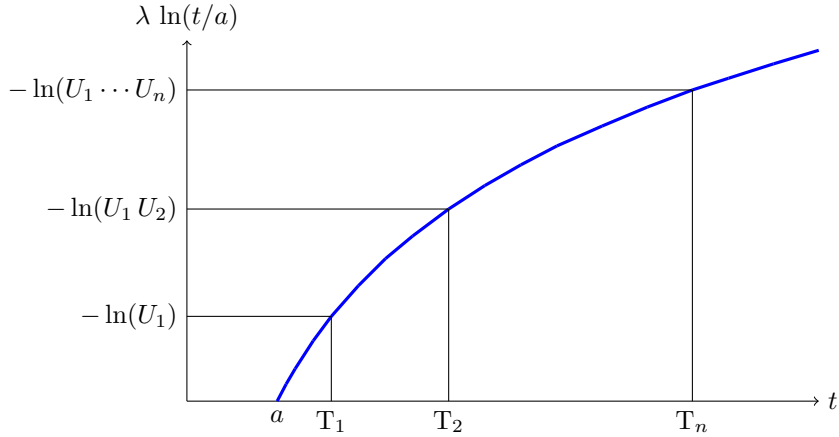
\begin{figure}[htb]
\centering
\begin{tikzpicture}[scale=1.2]
\draw[->] (0,0) -- (7,0) ;
\draw (7,0) node[right] {$t$};
\draw[->] (0,0) -- (0,4) ;
\draw (0,4) node[above] {$ \lambda \, \ln( t / a)$};
\draw [blue,very thick] (1.0,0.00) -- (1.1,0.19) ;
\draw [blue,very thick] (1.1,0.19) -- (1.2,0.36) ;
\draw [blue,very thick] (1.2,0.36) -- (1.4,0.67) ;
\draw [blue,very thick] (1.4,0.67) -- (1.6,0.94) ;
\draw [blue,very thick] (1.6,0.94) -- (1.9,1.28) ;
\draw [blue,very thick] (1.9,1.28) -- (2.2,1.58) ;
\draw [blue,very thick] (2.2,1.58) -- (2.5,1.83) ;
\draw [blue,very thick] (2.5,1.83) -- (2.9,2.13) ;
\draw [blue,very thick] (2.9,2.13) -- (3.3,2.39) ;
\draw [blue,very thick] (3.3,2.39) -- (3.7,2.62) ;
\draw [blue,very thick] (3.7,2.62) -- (4.1,2.83) ;
\draw [blue,very thick] (4.1,2.83) -- (4.6,3.05) ;
\draw [blue,very thick] (4.6,3.05) -- (5.1,3.26) ;
\draw [blue,very thick] (5.1,3.26) -- (5.6,3.45) ;
\draw [blue,very thick] (5.6,3.45) -- (6.0,3.58) ;
\draw [blue,very thick] (6.0,3.58) -- (6.5,3.74) ;
\draw [blue,very thick] (6.5,3.74) -- (7.0,3.89) ;
\draw (1,0) node[below] {$a$} ;
\draw (0,0.94) node[left] {$- \ln(U_1)$};
\draw (0,0.94) -- (1.6,0.94) ;
\draw (1.6,0.94) -- (1.6,0) ;
\draw (1.6,0) node[below] {$\Tau_1$} ;
\draw (0,2.13) node[left] {$- \ln(U_1 \, U_2)$};
\draw (0,2.13) -- (2.9,2.13) ;
\draw (2.9,2.13) -- (2.9,0) ;
\draw (2.9,0) node[below] {$\Tau_2$} ;
\draw (0,3.45) node[left] {$- \ln(U_1 \cdots U_n)$};
\draw (0,3.45) -- (5.6,3.45) ;
\draw (5.6,3.45) -- (5.6,0) ;
\draw (5.6,0) node[below] {$\Tau_n$} ;
\end{tikzpicture}
\caption{Simulation of a heterogeneous Poisson point process}\label{fig:ans}
\end{figure} 

\medskip

Now, recall that $\Tau$ has the same distribution as $\sum_{n=1}^{+\infty} X_{\Tau_n}$, where each $X_t$ is distributed 
as $\nu_t$. Because the $X_t$'s are independent, we have
$$ \mathrm{Var} \Bigl[ \sum_{n \geq 1} X_{\Tau_n} \Bigr] =  \sum_{n \geq 1} \mathrm{Var} \bigl[ X_{\Tau_n} \bigr]. $$
Moreover, \eqref{eq:tn} implies
$$ \mathrm{Var} \bigl[ X_{\Tau_n} \bigr] = \E \Bigl[ \mathrm{Var} \bigl[ X_{\Tau_n} \vert \Tau_n \bigr] \Bigr] = \E \Bigl[ \frac{2}{\Tau^2_n} \Bigr] 
= \frac{2}{a^2} \, \E \Bigl[ (U_1 \cdots U_n)^{2 \lambda} \Bigr] = \frac{2}{a^2} \, \left( \frac{\lambda}{\lambda + 2} 
\right)^n. $$
Consequently
$$ \mathrm{Var} \Bigl[ \sum_{n \geq 1} X_{\Tau_n} \Bigr] = \frac{2}{a^2} \, \sum_{n \geq 1} \left( \frac{\lambda}{\lambda + 2} 
\right)^n = \frac{\lambda}{a^2}. $$
Similarly, if the series is truncated at order $n_0$, then the same calculation leads to the residual variance  
$$ \mathrm{Var} \Bigl[ \sum_{n \geq n_0+1} X_{\Tau_n} \Bigr] = \frac{2}{a^2} \sum_{n \geq n_0+1} \left( \frac{\lambda}{\lambda + 2} 
\right)^n = \left( \frac{\lambda}{\lambda + 2} \right)^{n_0} \frac{\lambda}{a^2}. $$
Let $\varepsilon > 0$ be arbitrarily small. From the previous calculations, it follows that  
$$ \frac{ \mathrm{Var} \Bigl[ \sum_{n \geq n_0+1} X_{\Tau_n} \Bigr]}{\mathrm{Var} \Bigl[ \sum_{n \geq 1} X_{\Tau_n} \Bigr] } < \varepsilon \ \Longleftrightarrow \ \left( \frac{\lambda}{\lambda + 2} \right)^{n_0} < 
\varepsilon \ \Longleftrightarrow \  n_0 > \frac{- \ln \epsilon}{\ln ( 1 + 2 / \lambda) }. $$
\hfill $\Box$

\section{Proofs for the substitution approach}

\subsection{Proof of Theorem \ref{theorem2}}

For $(\bx,t) \in \R^k \times \R$, $Z(\bx,t)$ conditional on $(R,\tilde{\bOmega},V,W)$ (i.e., only letting $\Phi$ vary randomly) has a zero expectation, insofar as it is proportional to the cosine of a random variable uniformly distributed on an interval of length $2 \pi$. The prior expectation of $Z(\bx,t)$ is therefore zero:
\begin{equation*}
\E [Z(\bx,\bt) ] = \E [ \E[ Z(\bx,t) \mid R, \tilde{\bOmega}, V, W ] ] = 0.
\end{equation*}
Let us now calculate the covariance between the random variables $Z(\bx,t)$ and $Z(\bx^{\prime},t^{\prime})$, with $(\bx,t) \in \R^k \times \R$ and $(\bx^{\prime},t^{\prime}) \in \R^k \times \R$:
\begin{equation*}
\begin{split}
\E & [ Z(\bx,t) Z(\bx^{\prime},t^{\prime})] \\
& = 2 \E \bigg[-\ln(U) \cos \left(\sqrt{2R} \, \langle \tilde{\bOmega}, \bx \rangle + \frac{\lvert \tilde{\bOmega}\rvert}{\sqrt{2}} W(t) + \Phi \right) \cos \left(\sqrt{2R} \, \langle \tilde{\bOmega}, \bx^{\prime} \rangle + \frac{\lvert \tilde{\bOmega}\rvert}{\sqrt{2}} W(t^{\prime}) + \Phi \right) \bigg] \\
& = 2 \E \bigg[\cos \left(\sqrt{2R} \, \langle \tilde{\bOmega},\bx \rangle + \frac{\lvert \tilde{\bOmega}\rvert}{\sqrt{2}} W(t) + \Phi \right) \cos \left(\sqrt{2R} \, \langle  \tilde{\bOmega}, \bx^{\prime} \rangle + \frac{\lvert \tilde{\bOmega}\rvert}{\sqrt{2}} W(t^{\prime}) + \Phi \right) \bigg].
\end{split}
\end{equation*}
The last equality stems from the fact that $-\ln(U)$ is an exponential random variable with mean $1$ and is independent of $(R,\tilde{\bOmega},W)$. Using the product-to-sum trigonometric identities, one can write the product of cosines as half the sum of two cosines, namely:
\begin{itemize}
    \item the cosine of the difference: $\cos \left(\sqrt{2R} \, \langle \tilde{\bOmega}, \bx-\bx^{\prime} \rangle + \frac{\lvert \tilde{\bOmega}\rvert}{\sqrt{2}} (W(t)-W(t^{\prime})) \right)$
    
    \item the cosine of the sum: $\cos \left(\sqrt{2R} \, \langle \tilde{\bOmega}, \bx+\bx^{\prime} \rangle + \frac{\lvert \tilde{\bOmega}\rvert}{\sqrt{2}} (W(t)+W(t^{\prime})) + 2 \Phi \right)$.
\end{itemize}
However, because $\Phi$ is uniformly distributed on $(0, 2\pi)$ and independent of $(R, \tilde{\bOmega},W)$, the expectation of the cosine of the sum is 0. It remains 
\begin{equation*}
\E [Z(\bx,t) Z(\bx^{\prime},t^{\prime})]
 = \E \bigg[\cos \left(\sqrt{2R} \, \langle \tilde{\bOmega}, \bx-\bx^{\prime} \rangle + \frac{\lvert \tilde{\bOmega}\rvert}{\sqrt{2}} (W(t)-W(t^{\prime})) \right)\bigg].
\end{equation*}\\
The increment $W(t)-W(t^{\prime})$ is a Gaussian random variable with zero mean and variance $2 \gamma(t-t^{\prime})$ and is independent of $(R,\tilde{\bOmega})$, i.e.:
\begin{equation*}
W(t)-W(t^{\prime}) = \sqrt{2 \gamma(t-t^{\prime})} Y, 
\end{equation*}
with $Y \sim {\cal N} (0,1)$ independent of $(R,\tilde{\bOmega})$. Defining $\bh = \bx - \bx^{\prime}$ and $u = t - t^{\prime}$ and denoting by $g$ the standard Gaussian probability density, one therefore obtains:
\begin{equation}
\begin{split}
\E & [ Z(\bx,t) Z(\bx^{\prime},t^{\prime})] \\
  & = \int_{\R_+} \int_{\R^k} \int_{\R} \cos \left(\sqrt{2r} \, \langle \tilde{\bomega}, \bh \rangle + \lvert \tilde{\bomega} \rvert \sqrt{\gamma(u)} y \right) g(y) dy  \frac{1}{(2\pi)^{k/2}} \, \exp \left( -\frac{\lvert \tilde{\bomega} \rvert^2}{2} \right) d \tilde{\bomega} \mu(dr) \\ 
  & = \frac{1}{(2\pi)^{k/2}} \int_{\R_+} \int_{\R^k} \cos \left(\sqrt{2r} \, \langle \tilde{\bomega}, \bh \rangle \right) \int_{\R} \cos \left(\lvert\tilde{\bomega}\rvert \sqrt{\gamma(u)} y \right) g(y) dy \, \exp \left( -\frac{\lvert \tilde{\bomega} \rvert^2}{2} \right) d\tilde{\bomega} \mu(dr) . 
\end{split}
\label{eq:emery2}
\end{equation}
The last equality in \eqref{eq:emery2} stems from the angle-sum trigonometric identity and the fact that $\langle \tilde{\bomega}, \bh \rangle$ is an odd function of $\tilde{\bomega}$ and $\lvert\tilde{\bomega}\rvert \sqrt{\gamma(u)} y$ is an even function of $\tilde{\bomega}$.

The simulated random field $Z$ is therefore second-order stationary, since its expectation is identically zero and the covariance between any two variables $Z(\bx,t)$ and $Z(\bx^{\prime},t^{\prime})$ only depends on $\bh = \bx - \bx^{\prime}$ and $u = t - t^{\prime}$. Up to a multiplicative factor, the last integral in \eqref{eq:emery2} appears as the Fourier transform of the standard Gaussian probability density $g(y)$ on $\R$. Specifically:
\begin{equation*}
\int_{\R} \cos \left(\lvert\tilde{\bomega}\rvert \sqrt{\gamma(u)} y \right) g(y) dy 
= \exp \left( -\lvert\tilde{\bomega}\rvert^2 \frac{\gamma(u)}{2} \right). 
\end{equation*}
Hence:
\begin{equation}
\begin{split}
\E [ Z(\bx,t) Z(\bx^{\prime},t^{\prime})]
  & = \frac{1}{(2\pi)^{k/2}} \int_{\R_+}  \int_{\R^k}  \cos(\sqrt{2r} \, \langle \tilde{\bomega},\bh \rangle) \exp \left(- \lvert\tilde{\bomega}\rvert^2 \frac{\gamma(u)+1}{2} \right) d\tilde{\bomega} \mu(dr) \\ 
  & = C(\bh,u).
\end{split}
\label{eq:emery3}
\end{equation}
The last equality in \eqref{eq:emery3} stems from \eqref{eq:emery111} and completes the proof. \hfill $\Box$

\clearpage

\bibliographystyle{apalike}

\end{document}